\documentclass[leqno]{article}

\usepackage[]{amsfonts} \usepackage[]{amsmath}
\usepackage[]{amssymb} \usepackage[]{latexsym}
\usepackage{graphicx,color} \usepackage{amsthm}
\usepackage{mathrsfs} \usepackage{url}
\usepackage{cancel}
\usepackage{natbib}

\numberwithin{equation}{section}

\setlength{\parindent}{12 pt}

\begin{document}

\newtheorem{teo}{Theorem}[section] \newtheorem*{teo*}{Theorem}
\newtheorem{prop}[teo]{Proposition} \newtheorem*{prop*}{Proposition}
\newtheorem{lema}[teo]{Lemma} \newtheorem*{lema*}{Lemma}
\newtheorem{cor}[teo]{Corollary} \newtheorem*{cor*}{Corollary}

\theoremstyle{definition}
\newtheorem{defi}[teo]{Definition} \newtheorem*{defi*}{Definition}
\newtheorem{exem}[teo]{Example} \newtheorem*{exem*}{Example}
\newtheorem{obs}[teo]{Remark} \newtheorem*{obs*}{Remark}
\newtheorem{ass}[teo]{Assumption}
\newtheorem*{hipo}{Hypotheses}
\newtheorem*{nota}{Notation}

\newcommand{\ds}{\displaystyle} \newcommand{\nl}{\newline}
\newcommand{\eps}{\varepsilon}
\newcommand{\LMMR}{\mbox{LMMR}}
\newcommand{\bE}{\mathbb{E}}
\newcommand{\cA}{\mathcal{A}}
\newcommand{\cB}{\mathcal{B}}
\newcommand{\cC}{\mathcal{C}}
\newcommand{\cL}{\mathcal{L}}
\newcommand{\cS}{\mathcal{S}}
\newcommand{\cO}{\mathcal{O}}
\newcommand{\cQ}{\mathcal{Q}}
\newcommand{\cH}{\mathcal{H}}
\newcommand{\cF}{\mathcal{F}}
\newcommand{\cM}{\mathcal{M}}
\newcommand{\cI}{\mathcal{I}}
\newcommand{\cD}{\mathcal{D}}
\newcommand{\cT}{\mathcal{T}}
\newcommand{\cP}{\mathcal{P}}
\newcommand{\bP}{\mathbb{P}}
\newcommand{\bT}{\mathbb{T}}
\newcommand{\bD}{\mathbb{D}}
\newcommand{\bQ}{\mathbb{Q}}
\newcommand{\bC}{\mathbb{C}}
\newcommand{\bN}{\mathbb{N}}
\newcommand{\bR}{\mathbb{R}}
\newcommand{\rhor}{\raisebox{1.5pt}{$\rho$}}
\newcommand{\varphir}{\raisebox{1.5pt}{$\varphi$}}
\newcommand{\taur}{\raisebox{1pt}{$\tau$}}
\newcommand{\bfP}{\mbox{\textbf{P}}}
\newcommand{\bfI}{\mbox{\textbf{I}}}

\title{First-Order Asymptotics of Path-Dependent Derivatives in Multiscale Stochastic Volatility Environment}

\author{Yuri F. Saporito}

\maketitle

\begin{abstract}
In this paper, we extend the first-order asymptotics analysis of Fouque \textit{et al.} to general path-dependent financial derivatives using Dupire's functional It\^o calculus. The main conclusion is that the market group parameters calibrated to vanilla options can be used to price to the same order exotic, path-dependent derivatives as well. Under general conditions, the first-order condition is represented by a conditional expectation that could be numerically evaluated. Moreover, if the path-dependence is not too severe, we are able to find path-dependent closed-form solutions equivalent to the fist-order approximation of path-independent options derived in Fouque \textit{et al.} Additionally, we exemplify the results with Asian options and options on quadratic variation.
\end{abstract}

\section{Introduction}

% In 1973, Fischer Black and Myron Scholes published the groundbreaking article ``The Pricing of Options and Corporate Liabilities", \cite{blackscholes73}. In this work, under certain simplifying hypotheses, the authors derived a formula for the price of a call option, the famous \textit{Black--Scholes formula}. This formula takes as arguments the maturity of the call option, $T$; the strike of the call option, $K$; the current price of the underlying asset, $s$; the interest rate, $r$; and the volatility of the underlying asset, $\sigma$, which was supposed constant.

% However, the simplicity of the Black--Scholes model is not anymore sufficient to provide a good understanding of the modern derivatives market phenomena, see \cite{gatheral06} for statistical confirmations of this statement. The most restrictive assumption in the Black--Scholes model is the constancy of the volatility. Indeed, this assumption is incompatible with the \textit{smile} shape of the implied volatility surface of vanilla options observed in the market.

A natural generalization of the Black-Scholes model is within the framework of \textit{stochastic volatility models}. In these models, the volatility of the underlying asset is no longer assumed constant, but it is now modeled by a stochastic process.

Differently from the Black--Scholes model, there are virtually no closed-form solutions for option prices in stochastic volatility models, and hence it might be very difficult to get accurate option prices. An honorable exception is the quasi-closed solution of affine models as, for instance, the Heston model, \cite{heston93}.

The \textit{multiscale stochastic volatility models} of J.-P. Fouque, G. Papanicolaou, R. Sircar, and K. S{\o}lna are a powerful approach to reconcile stochastic volatility models and computational tractability of option prices (and calibration); see, for example, \cite{multiscale_fouque_new_book}.

The goal of this paper is to generalize the perturbation framework of Fouque \textit{et al.} to general path-dependent structures for the financial derivative payoff. The Functional It\^o Calculus, introduced by Bruno Dupire in the seminal paper \cite{fito_dupire}, is a tailored-made theory to handle path-dependence in It\^o's stochastic processes setting and hence it will be the chosen tool in our paper.

The main conclusion of our paper is that the first-order approximation in the path-dependent case is a straightforward generalization of the approximation in the classical, path-\textit{independent} case. Indeed, we have concluded that market group parameters are the same for the approximation of path-independent and path-dependent payoffs. This fact was verified for various path-dependent options previously, but in this paper we are able to prove, under mild smoothness assumptions, this result directly to any path-dependent structure. Furthermore, we show that the closed-form solutions for the first-order approximation found in the classical case is also established when the path-dependence is not too strong. Moreover, we consider Asian options and options on quadratic variation to exemplify the results.

In Section \ref{sec:mean_reversion}, we will provide the main results of the first-order perturbation analysis in the classical context of Fouque \textit{et al}. Then, in Section \ref{sec:fito_multiscale}, under the functional It\^o framework, we extend the fist-order correction to the case where the payoff of the financial derivative has a path-dependent structure.

\section{Multiscale Stochastic Volatility Models}\label{sec:mean_reversion}

Intuitively, mean-reversion indicates the return of a stochastic process to its long-run mean, when this mean exists. We will be mainly interested in the speed at which the process mean-reverts. The typical mean-reverting process to have in mind is the Ornstein-Uhlenbeck (OU) process:
\begin{align}
dy_t = \kappa (m - y_t)dt + \nu \sqrt{2\kappa} dw_t, \label{eq:ou}
\end{align}
where $\kappa > 0$ is mean-reversion rate, $m$ is the long-run mean, $\nu > 0$ is the volatility and $(w_t)_{t \geq 0}$ is a Brownian motion. The mean-reversion aspect of the OU process lies in its drift. Whenever $y_t > m$, the drift is negative and it pushes $y_t$ down towards the long-run mean $m$. The case when $y_t < m$ is similar. We can also see that the bigger the $\kappa$, the stronger is the mean-reversion. Typical sample paths are shown in Figure \ref{fig:fast_mean_rever_example}. In Finance, mean-reversion arises from the modeling of commodities, interest rate, volatility, currency exchange rates, etc.

\begin{figure}[!ht]
\centering
\includegraphics[width=0.8\linewidth]{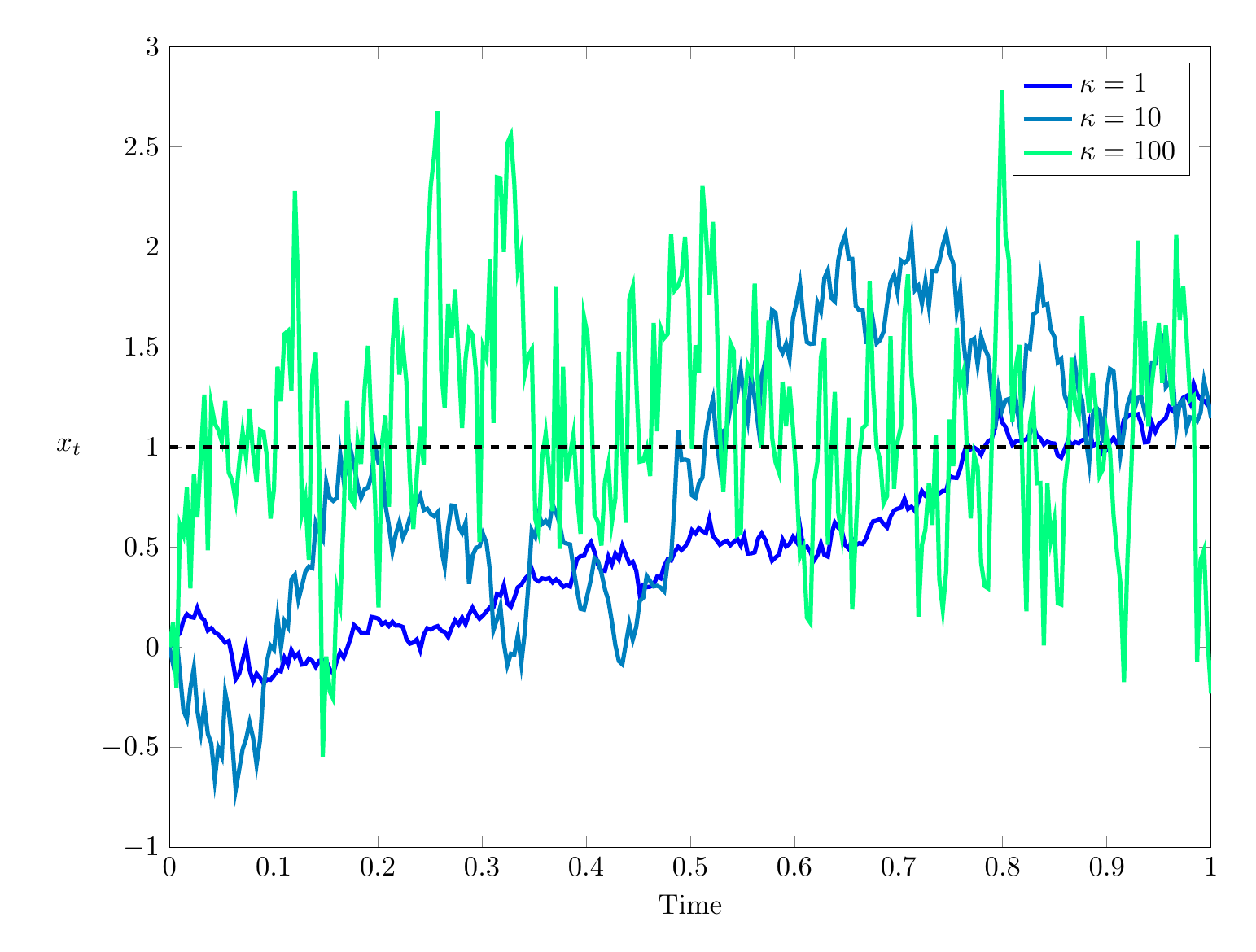}
\caption{Realizations of an OU process for different mean-reversion rates - Parameters: $x_0 = 0$, $m=1$, $\nu = 0.5$.}
\label{fig:fast_mean_rever_example}
\end{figure}

% Another very important and well-known example of mean-reverting process is the CIR process, \cite{cir85},
% \begin{align}
% dy_t = \kappa (m - y_t)dt + \nu \sqrt{2\kappa y_t} dw_t. \label{eq:cir}
% \end{align}

More formally, the notion of mean-reversion is expressed by the mathematically well-defined notion of ergodicity, see \cite[Section 3.2]{multiscale_fouque_new_book}. In our case, we will consider two processes, $y^\eps$ and $z^\delta$, that present fast and slow mean-reversion, respectively. Their dynamics can be written as follows
\begin{align}
&dy^{\eps}_t = \left(\frac{1}{\eps} \alpha(y_t^{\eps}) - \frac{1}{\sqrt{\eps}} \beta(y_t^{\eps}) \Gamma_1(y^{\eps}_t,z^{\delta}_t) \right)dt + \frac{1}{\sqrt{\eps}} \beta(y_t^{\eps}) dw_t^{(1)}, \\ 
&dz^{\delta}_t = \left(\delta c(y_t^{\eps}) - \sqrt{\delta} g(y_t^{\eps}) \Gamma_2(y^{\eps}_t,z^{\delta}_t) \right)dt + \sqrt{\delta} g(z^{\delta}_t) dw_t^{(2)}, \\
&dw_t^{(1)}dw_t^{(2)} = \rhor_{12} dt.
\end{align}
where $\alpha, \beta, c$ and $g$ satisfy certain requirements in order to guarantee the mean reversion of these processes.

If $T$ is the typical maturity of options contracts in this market, both $\eps$ and $\delta$ should be thought as small parameters in the sense that $\eps \ll T \ll 1/\delta$.

These two mean-reverting processes will model the time-scales of the volatility of the stock price we are modeling. More precisely, we will hereby assume that the stock price $s_t$, under a risk-neutral measure $\bQ$, follows
\begin{align}\label{eq:sde_risk_neutral_intro}
\left\{
\begin{array}{l}
ds_t = r s_t dt + f(y^{\eps}_t,z^{\delta}_t) s_t dw_t^{(0)},\\ \\
\ds dy^{\eps}_t = \left(\frac{1}{\eps} \alpha(y_t^{\eps}) - \frac{1}{\sqrt{\eps}} \beta(y_t^{\eps}) \Gamma_1(y^{\eps}_t,z^{\delta}_t) \right)dt + \frac{1}{\sqrt{\eps}} \beta(y_t^{\eps}) dw_t^{(1)}, \\ \\
dz^{\delta}_t = \left(\delta c(y_t^{\eps}) - \sqrt{\delta} g(y_t^{\eps}) \Gamma_2(y^{\eps}_t,z^{\delta}_t) \right)dt + \sqrt{\delta} g(z^{\delta}_t) dw_t^{(2)},
\end{array}
\right.
\end{align}
where $(w_t^{(0)},w_t^{(1)},w_t^{(2)})$ is a correlated $\bQ$-Brownian motion with
$$dw_t^{(0)}dw_t^{(i)} = \rhor_i dt, \ i=1,2, \ dw_t^{(1)}dw_t^{(2)} = \rhor_{12} dt.$$
The functions $\Gamma_1$ and $\Gamma_2$ together completely define the market price of volatility risk and uniquely determine the risk-neutral measure $\bQ$. The usual assumptions are required for the correlations and the functions $\Gamma_1$ and $\Gamma_2$, see Theorem \ref{thm:accuracy_intro}.

\subsection{First-Order Approximation}\label{sec:first_order}

This section will provide results on the first-order approximation in $\sqrt{\eps}$ and $\sqrt{\delta}$ of option prices when the volatility is governed by the dynamics described in Equation (\ref{eq:sde_risk_neutral_intro}).

Consider a European financial derivative with maturity $T$ and whose payoff $\varphi$ depends only on the terminal value of the stock, $s_T$, and hence called path-independent. The no-arbitrage price under $\bQ$ for this derivative is given by
$$P^{\eps,\delta}(t,x
,y,z) = \bE_{\bQ}[e^{-r(T-t)}\varphi(s_T) \ | \ s_t = x, y^{\eps}_t = y, z^{\delta}_t = z].$$
We are using the fact that $(s,y^{\eps},z^{\delta})$ is a Markov process. 

In Section \ref{sec:fito_multiscale}, we will perform the formal regular and singular perturbation analysis in the path-dependent framework. Here, we will list  the formulas of the approximation in the path-independent for the sake of comparison. Indeed, the formulas that we will find in the aforementioned section share the essence with the first-order approximation under this situation. For the complete analysis of the path-independent case, the reader is refereed to \cite{multiscale_fouque_new_book}. There the reader will also be able to examine the unfolding of this approach into diverse topics in Mathematical Finance.

Before proceeding, we will make precise the notation of our approximation results:
\begin{defi}
We say that a function $g^{\eps,\delta}$ is a first-order approximation in powers of $\sqrt{\eps}$ and $\sqrt{\delta}$ to the function $f^{\eps,\delta}$ if
$$|g^{\eps,\delta} - f^{\eps,\delta}| \leq C(\eps + \delta),$$
pointwise, for some constant $C > 0$ and for sufficiently small $\eps,\delta > 0$. We use the notation
\begin{align}
g^{\eps,\delta} - f^{\eps,\delta} = O(\eps + \delta). \label{eq:bigO_intro}
\end{align}
\end{defi}

We start the description of the first-order approximation by formally expanding $P^{\eps,\delta}$ is powers of $\sqrt{\eps}$ and $\sqrt{\delta}$:
\begin{align}
P^{\eps,\delta} = P_0 + \sqrt{\eps} P_{1,0} + \sqrt{\delta}P_{0,1} + \ldots \label{eq:first_order_P_intro}
\end{align}

Following the arguments presented in \cite{multiscale_fouque_new_book}, one can show that $P_0$, $P_{1,0}^{\eps} := \sqrt{\eps}P_{1,0}$ and $P_{0,1}^{\delta} := \sqrt{\delta}P_{0,1}$ should satisfy the following PDEs
\begin{align}
&\left\{
\begin{array}{l}
 \cL_{BS}(\bar{\sigma}(z))P_0(t,x,z) = 0, \\ \\
 P_0(T,x,z) = \varphi(x),
\end{array}
\right. \label{eq:pde_p0_intro} \\ \nonumber\\
&\left\{
\begin{array}{l}
 \cL_{BS}(\bar{\sigma}(z))P^{\eps}_{1,0}(t,x,z) = -\cA^{\eps}P_0(t,x,z), \\ \\
 P^{\eps}_{1,0}(T,x,z) = 0,
\end{array}
\right.\label{eq:pde_p_1_eps_intro}\\ \nonumber\\
&\left\{
\begin{array}{l}
 \cL_{BS}(\bar{\sigma}(z)) P_{0,1}^{\delta}(t,x,z) = -2\cA^{\delta}P_0(t,x,z), \\ \\
 P_{0,1}^{\delta}(T,x,z) = 0,
\end{array}
\right. \label{eq:pde_p_1_delta_intro}
\end{align}
where

\begin{align}
& \cL_{BS}(\sigma) = \frac{\partial}{\partial t} + \frac{1}{2} \sigma^2 D_2 + r(D_1 - \cdot),\\
&\bar{\sigma}^2(z) = \langle f^2(\cdot,z) \rangle, \label{eq:sigma_bar_intro} \\
&\cA^{\eps} = V_3^{\eps}(z) D_1 D_2 + V_2^{\eps}(z) D_2, \label{eq:cA_eps_intro} \\
&V_3^{\eps}(z) = -\frac{\rho_1\sqrt{\eps}}{2} \left\langle \beta \ f(\cdot,z) \frac{\partial \phi}{\partial y}(\cdot,z) \right\rangle, \label{eq:V3eps_intro} \\
&V_2^{\eps}(z) = \frac{ \sqrt{\eps}}{2} \left\langle \beta \ \Gamma_1(\cdot,z) \frac{\partial \phi}{\partial y}(\cdot,z) \right\rangle, \label{eq:V2eps_intro} \\
&\cA^{\delta} = -V_0^{\delta}(z) \frac{\partial}{\partial \sigma} - V_1^{\delta}(z) D_1 \frac{\partial}{\partial \sigma}, \label{eq:cA_delta_intro} \\
&V_0^{\delta}(z) = - \frac{g(z)\sqrt{\delta}}{2} \left\langle \Gamma_2(\cdot,z)\right\rangle \bar{\sigma}'(z), \label{eq:V0delta_intro} \\
&V_1^{\delta}(z) = \frac{\rho_2 g(z)\sqrt{\delta}}{2} \left\langle f(\cdot,z) \right\rangle \bar{\sigma}'(z),\label{eq:V1delta_intro}\\
&D_k = x^k \frac{\partial^k }{\partial x^k}. \label{eq:Dk_intro}
\end{align}

The function $\phi$ above is defined as the solution of the Poisson equation:
\begin{align}
\cL_0 \phi(y,z) = f^2(y,z) - \bar{\sigma}^2(z), \label{eq:poisson_eq_intro}
\end{align}
with $z$ being just a parameter, where $\cL_0$ is the infinitesimal generator of the process $y^1$ under the physical measure $\bP$, see Equation (\ref{eq:path_cL_0}).

It is worth noticing that the first-order approximation was chosen independently of $y$, the initial value of the process $y^{\eps}$. This is an important feature of this approximation because the process $y^{\eps}$ is unobservable and hence the estimation of $y$ would be difficult. Moreover, the dependence with respect to $z$, the initial value of $z^{\delta}$, is only through the parameters $(\bar{\sigma}(z), V_0^{\delta}(z), V_1^{\delta}(z), V_2^{\eps}(z), V_3^{\eps}(z))$. Therefore, it is not necessary to estimate the particular value of $z$ either.

One can further show the following explicit formulas are valid
\begin{align}
&P_0(t,x,z) = P_{BS}(t,x; \bar{\sigma}(z)), \label{eq:p0_into} \\
&P^{\eps}_{1,0}(t,x,z) = (T - t)\cA^{\eps} P_{BS}(t,x; \bar{\sigma}(z)), \label{eq:p10eps_into}\\
&P^{\delta}_{0,1}(t,x,z) = (T - t)\cA^{\delta} P_{BS}(t,x; \bar{\sigma}(z)), \label{eq:p01delta_into}
\end{align}
where $P_{BS}(t,x; \sigma)$ is the price at $(t,x)$ of the European option with maturity $T$ and payoff function $\varphi$ under the Black--Scholes model with constant volatility $\sigma$. Therefore, the leading term of the approximation is the Black--Scholes price of the option with the effective volatility $\bar{\sigma}(z)$ and the first-order correction only involves Greeks of this price. The proof of these representations rely heavily on the commutation of the undiscounted Black--Scholes PDE operator, $\cL_{BS}(\sigma) + r \cdot$, and the operators $\cA^\eps$ and $\cA^\delta$. This observation will be very important when considering path-dependent payoff.

The accuracy of this approximation can be proved under the following assumptions. For the proof, we forward the reader to \cite{fouque_second_order}.

\begin{teo}\label{thm:accuracy_intro}

We assume

\begin{enumerate}

\item Existence and uniqueness of the SDE (\ref{eq:sde_risk_neutral_intro}) for any fixed $(\eps,\delta)$;

\item The function $f(y,z)$ is measurable, bounded, bounded away from zero, smooth in $z$ and such the solution $\phi$ to the Poisson equation (\ref{eq:poisson_eq_intro}) is at most polynomially growing;

\item The process $y^1$ has a unique invariant distribution, is mean-reverting as defined in \cite[Section 3.2]{multiscale_fouque_new_book}, and has moments of any order uniformly in $t > 0$;

\item The process $z^1$ has moments of any order uniformly in $t \leq T$, for any fixed $T > 0$;

\item The market prices of volatility risk $\Gamma_1$ and $\Gamma_2$ are bounded;

\item The payoff function $\varphi$ is measurable, locally bounded and at most polynomially growing at 0 and $\infty$.

\end{enumerate}
Then,
$$P^{\eps,\delta}(t,x,y,z) = P_0(t,x,z) + P_{1,0}^{\eps}(t,x,z) + P_{0,1}^{\delta}(t,x,z) + O(\eps + \delta).$$
\end{teo}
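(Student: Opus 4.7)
The plan is to use the standard singular/regular perturbation machinery: write down the full pricing PDE, construct an explicit higher--order approximation that solves it up to a source of order $\eps+\delta$, and then use a Feynman--Kac probabilistic representation together with the moment and growth assumptions to convert this source bound into a pointwise bound on the residual.

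First, I would write the infinitesimal generator of $(s,y^\eps,z^\delta)$ under $\bQ$ in the form
$$\cL^{\eps,\delta} \;=\; \tfrac{1}{\eps}\cL_0 \;+\; \tfrac{1}{\sqrt{\eps}}\cL_1 \;+\; \cL_2 \;+\; \sqrt{\delta}\,\cM_1 \;+\; \delta\,\cM_2 \;+\; \sqrt{\delta/\eps}\,\cM_3,$$
where $\cL_0$ is the generator of $y^1$ with $z$ frozen, $\cL_2$ is the Black--Scholes operator with volatility $f(y,z)$, and the remaining pieces collect the mixed/slow terms. The price is characterized by $\cL^{\eps,\delta}P^{\eps,\delta}=0$ with $P^{\eps,\delta}(T,\cdot)=\varphi$. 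Substituting the expansion \eqref{eq:first_order_P_intro} and matching powers of $\sqrt{\eps}$ and $\sqrt{\delta}$ forces $P_0$ to be $y$--independent (from $\cL_0 P_0=0$ and ergodicity of $y^1$), and then, via the Fredholm/centering condition for $\cL_0$ and the Poisson equation \eqref{eq:poisson_eq_intro}, produces exactly \eqref{eq:pde_p0_intro}--\eqref{eq:pde_p_1_delta_intro}. Assumption (ii) guarantees that $\phi$ and its derivatives grow at most polynomially, which is essential for the later moment estimates.

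Next, I would push the expansion two orders further by explicitly constructing auxiliary terms $P_{2,0}$, $P_{1,1}$, $P_{0,2}$ as solutions of the corresponding Poisson/PDE problems, and form the candidate $\tilde{P}^{\eps,\delta} := P_0+\sqrt{\eps}P_{1,0}+\sqrt{\delta}P_{0,1}+\eps P_{2,0}+\sqrt{\eps\delta}P_{1,1}+\delta P_{0,2}$, with a small terminal adjustment ensuring $\tilde{P}^{\eps,\delta}(T,\cdot)=\varphi$ exactly. Applying $\cL^{\eps,\delta}$ to the residual $R^{\eps,\delta}:=P^{\eps,\delta}-\tilde{P}^{\eps,\delta}$ cancels, by construction, all terms of order lower than $\eps+\delta$, leaving $\cL^{\eps,\delta}R^{\eps,\delta}=F^{\eps,\delta}$ for an explicit polynomially--growing source $F^{\eps,\delta}$ of order $\eps+\delta$. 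Feynman--Kac then gives
$$R^{\eps,\delta}(t,x,y,z) \;=\; \bE_{\bQ}\!\left[\int_t^T e^{-r(u-t)}F^{\eps,\delta}(u,s_u,y^\eps_u,z^\delta_u)\,du \,\Big|\, s_t=x,\,y^\eps_t=y,\,z^\delta_t=z\right],$$
and the uniform-in-$t$ moments in (iii)--(iv), boundedness of $f$ and $\Gamma_i$ from (ii) and (v), and standard moment bounds for $s$ combine to yield $|R^{\eps,\delta}|\le C(\eps+\delta)$.

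The main obstacle is the \emph{low regularity of the payoff}: assumption (vi) only requires $\varphi$ to be measurable with polynomial growth, so the Greeks of $P_{BS}$ that enter $P_{1,0}$, $P_{0,1}$ and the higher--order corrections blow up as $t\uparrow T$, preventing a direct application of It\^o's formula to $\tilde P^{\eps,\delta}$. The fix I would follow is to first regularize $\varphi$ by convolution with a mollifier of width $\eta>0$, run the argument above to obtain an $O(\eps+\delta)$ bound with a constant that may depend on $\eta$, and then pass $\eta\downarrow 0$ using the smoothing property of the Black--Scholes semigroup away from maturity together with continuity of the pricing map in $\varphi$. Carefully tracking the boundary layer at $t=T$ and the $\eta$--dependence of the source bounds is the technical heart of the argument, which is why the proof is deferred to \cite{fouque_second_order}.
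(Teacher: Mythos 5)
The paper does not prove this theorem: it explicitly defers to \cite{fouque_second_order} (and \cite{multiscale_fouque_new_book}), so there is no in-text proof to compare against. Your outline is the standard accuracy argument of that reference, and the overall strategy --- build a higher-order corrector, bound the residual via Feynman--Kac and the moment assumptions, and handle the merely measurable payoff by mollification plus a terminal-layer analysis --- is the right one.

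There is, however, one concrete gap in your construction of the corrector. With
$\tilde{P}^{\eps,\delta} = P_0+\sqrt{\eps}P_{1,0}+\sqrt{\delta}P_{0,1}+\eps P_{2,0}+\sqrt{\eps\delta}P_{1,1}+\delta P_{0,2}$
as you define it, applying $\cL^{\eps,\delta}$ leaves the term
$\sqrt{\eps}\,(\cL_1 P_{2,0}+\cL_2 P_{1,0})$
(and the analogous $\sqrt{\eps\delta}$ cross term). The centering condition only guarantees that $\langle \cL_1 P_{2,0}+\cL_2 P_{1,0}\rangle=0$; the quantity itself still depends on $y$ and is generically nonzero, so your source $F^{\eps,\delta}$ is $O(\sqrt{\eps})$, not $O(\eps+\delta)$, and Feynman--Kac would only give $|R^{\eps,\delta}|\le C\sqrt{\eps}$. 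The standard fix is to include the odd-order correctors $\eps^{3/2}P_{3,0}$ with $\cL_0 P_{3,0}=-(\cL_1 P_{2,0}+\cL_2 P_{1,0})$ (solvable by the Fredholm alternative precisely because the right-hand side is centered), and similarly an $\eps\sqrt{\delta}$-order term for the cross contribution; these absorb the offending terms into the $O(\eps+\delta)$ source and, being themselves $O(\eps^{3/2}+\eps\sqrt{\delta})$ by the polynomial-growth and moment assumptions, do not affect the final estimate. With that amendment (and the care you already flag at the terminal layer, where assumption (vi) forces the regularization argument), your sketch matches the proof in the cited reference.
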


A valuable feature of the perturbation method is that in order to compute the first-order approximation, we only need the values of the \textit{group market parameters}
$$(\bar{\sigma}(z), V_0^{\delta}(z), V_1^{\delta}(z), V_2^{\eps}(z), V_3^{\eps}(z)).$$
This feature can also be seen as model independence and robustness of this approximation: under the regularity conditions stated in Theorem \ref{thm:accuracy_intro}, this approximation is independent of the particular form of the coefficients describing the process $y^{\eps}$ and $z^{\delta}$, i.e. the functions $\alpha$, $\beta$, $c$ and $g$ involved in the model (\ref{eq:sde_risk_neutral_intro}). The group market parameters can be interpreted as follows
\begin{itemize}

\item $\bar{\sigma}(z)$ is the effective volatility;

\item $V_0^{\delta}(z)$ measures the first-order impact of part of the market price of volatility risk;

\item $V_1^{\delta}(z)$ has the same sign as the correlation of the slow factor and the stock price;

\item $V_2^{\eps}(z)$ measures the first-order impact of part of the market price of volatility risk;

\item $V_3^{\eps}(z)$ has the same sign as the correlation of the fast factor and the stock price.

\end{itemize}

\begin{obs}[Parameter Reduction]\label{rmk:parameter_red}

$V_2^{\eps}(z)$ can be incorporated into the effective volatility. More precisely, we may consider the corrected volatility level $\sigma^{\star}(z)$ defined as
\begin{align}
\sigma^{\star}(z) = \sqrt{\bar{\sigma}^2(z) + 2V_2^{\eps}(z)}. \label{eq:sigma_star_intro}
\end{align}
Using this new volatility level, one could show that
\begin{align}
&\overline{P}^{\eps,\delta}(t,x,z) = P_{BS}^{\star}(t,x,z) \label{eq:approx_reduced_intro} \\
&+ (T-t) \left(V_0^{\delta}(z) \frac{\partial P_{BS}^{\star}}{\partial \sigma} + V_1^{\delta}(z) D_1 \frac{\partial P_{BS}^{\star}}{\partial \sigma} +V_3^{\eps}(z) D_1 D_2 P_{BS}^{\star} \right), \nonumber
\end{align}
approximates $P^{\eps,\delta}$ to the first-order, where $P_{BS}^{\star}(t,x,z) = P_{BS}(t,x; \sigma^{\star}(z))$.

\end{obs}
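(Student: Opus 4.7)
The plan is to show that $\overline{P}^{\eps,\delta} - (P_0 + P_{1,0}^{\eps} + P_{0,1}^{\delta}) = O(\eps+\delta)$; combined with Theorem~\ref{thm:accuracy_intro}, this yields the claim by the triangle inequality. Inspecting the definitions (\ref{eq:V3eps_intro})--(\ref{eq:V1delta_intro}), I note that $V_2^\eps, V_3^\eps = O(\sqrt\eps)$ and $V_0^\delta, V_1^\delta = O(\sqrt\delta)$, so a first-order Taylor expansion of the square root in (\ref{eq:sigma_star_intro}) gives
$$\sigma^\star(z) = \bar\sigma(z) + \frac{V_2^\eps(z)}{\bar\sigma(z)} + O(\eps),$$
well-defined once $\eps$ is small enough that $\bar\sigma^2 + 2V_2^\eps > 0$ (guaranteed because $\bar\sigma$ is bounded away from zero by the assumptions on $f$).

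Next I would Taylor-expand $P_{BS}^\star = P_{BS}(\bar\sigma) + (\sigma^\star-\bar\sigma)\,\partial_\sigma P_{BS}(\bar\sigma) + O((\sigma^\star-\bar\sigma)^2)$ and invoke the classical Black--Scholes Vega--Gamma identity
$$\partial_\sigma P_{BS}(t,x;\sigma) = (T-t)\,\sigma\, D_2 P_{BS}(t,x;\sigma),$$
obtained by differentiating $\cL_{BS}(\sigma) P_{BS} = 0$ in $\sigma$ and checking that both $\partial_\sigma P_{BS}$ and $(T-t)\sigma D_2 P_{BS}$ solve the same inhomogeneous Black--Scholes PDE with zero terminal data, so that uniqueness forces equality. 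Applying this identity converts the linear Taylor correction $(\sigma^\star-\bar\sigma)\,\partial_\sigma P_{BS}(\bar\sigma)$ into exactly $(T-t)\,V_2^\eps\, D_2 P_{BS}(\bar\sigma)$, which is precisely the $V_2^\eps$ piece of $\cA^\eps P_0$ appearing in $P_{1,0}^\eps$.

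For each of the remaining correction terms in (\ref{eq:approx_reduced_intro}) I would replace $P_{BS}^\star$ by $P_{BS}(\bar\sigma)$ using the same Taylor expansion; since every prefactor $V_3^\eps, V_0^\delta, V_1^\delta$ is already of order $\sqrt\eps$ or $\sqrt\delta$, the replacement costs $O(\eps)$ or $O(\sqrt{\eps\delta})$, and $\sqrt{\eps\delta}\leq \tfrac{1}{2}(\eps+\delta)$ by AM--GM. Collecting the contributions yields $\overline{P}^{\eps,\delta} = P_0 + P_{1,0}^\eps + P_{0,1}^\delta + O(\eps+\delta)$, as desired.

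The main technical point is controlling the Taylor remainders uniformly in $(t,x)$: one needs polynomial bounds on the spatial Greeks $D_1^j D_2^k P_{BS}$ and on higher $\sigma$-derivatives (volga and beyond) of $P_{BS}$, with constants stable as $\sigma$ varies in a small neighborhood of $\bar\sigma$. These follow from the Black--Scholes smoothing effect and the assumptions on $\varphi$ and $f$ in Theorem~\ref{thm:accuracy_intro}, together with the fact that $\sigma^\star$ is uniformly bounded above and below away from zero for all sufficiently small $\eps$.
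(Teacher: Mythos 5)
Your proof is correct and follows exactly the intended argument (the paper itself only asserts the remark, deferring to Fouque \emph{et al.}): Taylor-expand $\sigma^{\star}(z)=\bar{\sigma}(z)+V_2^{\eps}(z)/\bar{\sigma}(z)+O(\eps)$, use the Vega--Gamma identity $\partial_\sigma P_{BS}=(T-t)\sigma D_2 P_{BS}$ to turn the linear term into the $(T-t)V_2^{\eps}D_2 P_{BS}(\bar\sigma)$ piece of $P^{\eps}_{1,0}$, and absorb the substitution of $P_{BS}^{\star}$ for $P_{BS}(\bar\sigma)$ in the remaining corrections into $O(\eps+\sqrt{\eps\delta})=O(\eps+\delta)$ before invoking Theorem \ref{thm:accuracy_intro} and the triangle inequality. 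No gaps beyond the remainder estimates you already flag.
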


\subsection{Calibration to Implied Volatilities}\label{sec:calibration_implied}

In terms of implied volatility, this perturbation analysis translates into an affine approximation in the log-moneyness to maturity ratio (LMMR), which is formally defined in Equation (\ref{eq:LMMR_intro}) below. 

One can show then that the first-order approximation of the implied volatility is
\begin{align}
b^{\star} + (T-t)b^{\delta} + (a^{\eps} + (T-t) a^{\delta}) \LMMR, \label{eq:implied_vol_approx_intro}
\end{align}
where
\begin{align}
&\LMMR = \frac{\log(K/x)}{T-t}, \label{eq:LMMR_intro} \\
&b^{\star} = \sigma^{\star}(z) + \frac{V_3^{\eps}(z)}{2\sigma^{\star}(z)} \left(1 - \frac{2r}{{\sigma^{\star}}^2(z)} \right), \ a^{\eps} = \frac{V_3^{\eps}(z)}{{\sigma^{\star}}^3(z)}, \label{eq:bstar_aeps_intro} \\
&b^{\delta} = V_0^{\delta}(z) + \frac{V_1^{\delta}(z)}{2} \left(1 - \frac{2r}{{\sigma^{\star}}^2(z)} \right), \ a^{\delta} = \frac{V_1^{\delta}(z)}{{\sigma^{\star}}^2(z)}. \label{eq:bdelta_aeps_intro}
\end{align}

Inverting the formulas (\ref{eq:bstar_aeps_intro}) and (\ref{eq:bdelta_aeps_intro}) to the first-order of accuracy, we find the calibration formulas
\begin{align}
&\sigma^{\star}(z) = b^{\star} + a^{\eps} \left(r - \frac{{b^{\star}}^2}{2} \right), \ V_3^{\eps}(z) = a^{\eps} {b^{\star}}^3, \label{eq:cal_for_1_intro} \\
&V_0^{\delta}(z) = b^{\delta} + a^{\delta} \left(r - \frac{{b^{\star}}^2}{2} \right), \ V_1^{\delta}(z) = a^{\delta} {b^{\star}}^2. \label{eq:cal_for_2_intro}
\end{align}
Therefore, one could very easily calibrate the parameters $(b^{\star}, b^{\delta}, a^{\eps}, a^{\delta})$ to real data and use the formulas above to compute the calibrated values of $(\sigma^{\star}(z), V_3^{\eps}(z),$ $V_0^{\delta}(z), V_1^{\delta}(z))$. Below, we present an example of the first-order approximation of an implied volatility surface.

\begin{figure}[!ht]
\centering
\includegraphics[width=0.8\linewidth]{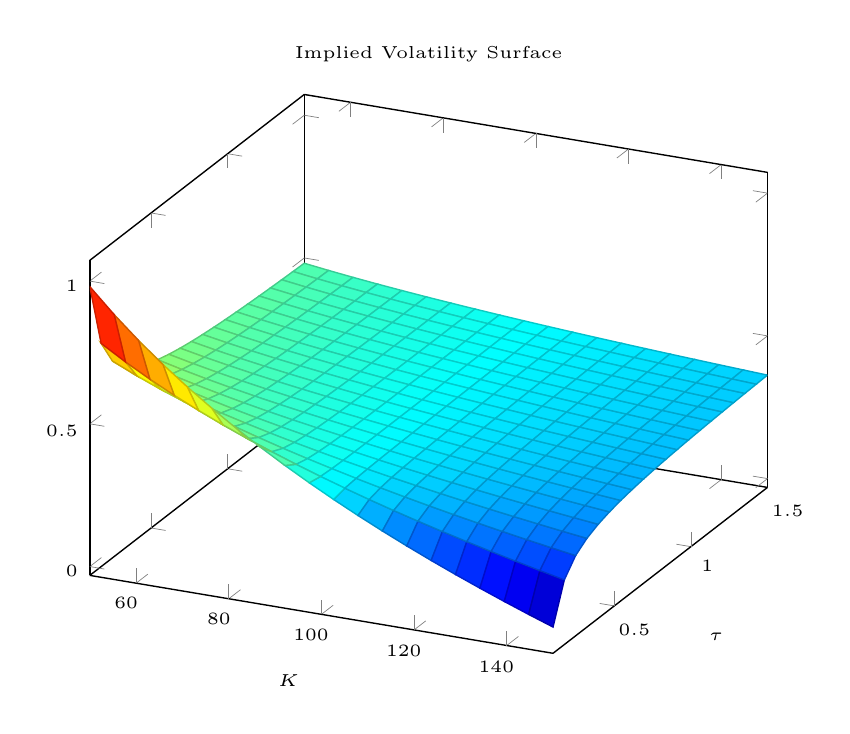}
\caption{First-Order Approximation of the Implied Volatility Surface - Parameters: $\sigma^{\star} = 0.4$, $V_0^{\delta} = 0.006$, $V_1^{\delta} = -0.009$, $V_3^{\eps} = -0.005$, $r = 0.05$.}
\label{fig:imp_vol_approx}
\end{figure}

\section{Path-Dependent Financial Derivatives}\label{sec:fito_multiscale}

Firstly, we will introduce the main notation, definitions and results of the Functional It\^o Calculus theory, as it was introduced in \cite{fito_dupire}, that will be necessary in what follows.

\subsection{A Brief Introduction to Functional It\^o Calculus}

The space of c\`adl\`ag paths up to time $t$ will be denoted by $\Lambda_t$. We also fix a time horizon $T > 0$. The \textit{space of paths} is then defined as
$$\Lambda = \bigcup_{t \in [0,T]} \Lambda_t.$$

We will denote elements of $\Lambda$ by upper case letters and the final time of its domain will be subscripted, e.g. $X \in \Lambda_t \subset \Lambda$ will be denoted by $X_t$. The value of $X_t \in \Lambda$ at a specific time will be denoted by lower case letter: $x_s := X_t(s)$, for any $s \leq t$. Moreover, if a path $X_t \in \Lambda$ is fixed, the path $X_s$, for $s \leq t$, will denote the restriction of the path $X_t$ to the set $[0,s]$. A \textit{functional} is any function $f:\Lambda \longrightarrow \bR$. The functional time and space derivatives are defined as the following limits, when they exist,
\begin{align}
\Delta_t f(X_t) &= \lim_{\delta t \to 0^+} \frac{f(X_{t,\delta t}) - f(X_t)}{\delta t}, \label{time_deriv}\\
\Delta_x f(X_t) &= \lim_{h \to 0} \frac{f(X_t^h) - f(X_t)}{h}, \label{space_deriv}
\end{align}
where $X_{t,\delta t}$ and $X_t^h$, for $\delta t > 0$ and $h \in \bR$, are given by
\begin{align*}
X_{t,\delta t}(u) &= \left\{
\begin{array}{l}
  x_u \ , \quad \mbox{if} \quad 0 \leq u \leq t, \\
  x_t \ , \quad \mbox{if} \quad t \leq u \leq t + \delta t,
\end{array}
\right.\\ \\
X_t^h(u) &= \left\{
\begin{array}{l}
  x_u \ , \quad \mbox{if} \quad 0 \leq u < t, \\
  x_t + h \ , \quad \mbox{if} \quad u = t,
\end{array}
\right.
\end{align*}
see Figures \ref{fig_flat} and \ref{fig_bump}.
\begin{figure}[!ht]
  \begin{minipage}[b]{0.5\linewidth}
    \centering
    \includegraphics[width=0.5\linewidth]{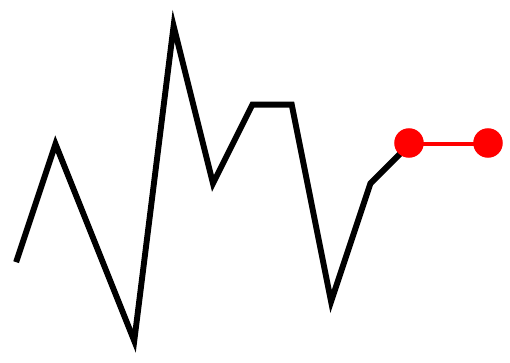}
    \caption{Flat extension of a path.}
    \label{fig_flat}
  \end{minipage}
  \begin{minipage}[b]{0.5\linewidth}
    \centering
    \includegraphics[width=0.5\linewidth]{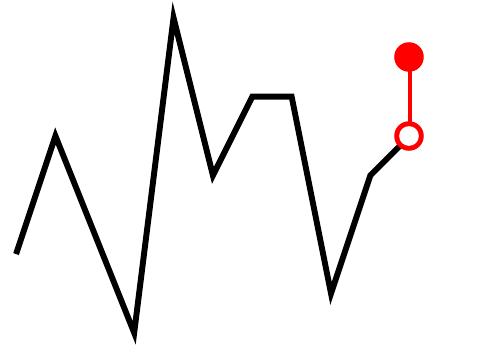}
    \caption{Bumped path.}
    \label{fig_bump}
  \end{minipage}
\end{figure}

For any $X_t, Y_u \in \Lambda$, where it is assumed without loss of generality that $u \geq t$, we consider the following metric in $\Lambda$:
$$d_{\Lambda}(X_t,Y_u) = \| X_{t,u-t} - Y_u\|_{\infty} + u -t.$$
Moreover, a functional $f$ is said $\Lambda$-continuous if it is continuous with respect to the metric $d_{\Lambda}$. Finally, a functional $f: \Lambda \longrightarrow \bR$ is said to belong to $\bC^{1,2}$ if it is $\Lambda$-continuous and it has $\Lambda$-continuous derivatives $\Delta_t f$, $\Delta_x f$ and $\Delta_{xx} f$.

Before continuing, we fix a probability space $(\Omega, \cF, \bP)$ and provide some comments about conditional expectation in the context of paths and functionals. For any $u \leq t$ in $[0,T]$, denote by $\Lambda_{u,t}$ the space of c\`adl\`ag paths in $[u,t]$. Now define the operator $(\cdot \ \otimes \ \cdot) : \Lambda_{u,t} \times \Lambda_{t,T} \longrightarrow \Lambda_{u,T}$, the \textit{concatenation} of paths, by
$$(X \otimes Y)(u) = \left\{
\begin{array}{ll}
  x_r \ , \quad &\mbox{if} \quad u \leq r \leq t \\
  y_r - y_t + x_t  \ , \quad &\mbox{if} \quad t \leq r \leq T,
\end{array}
\right.$$
which is a continuous paste of $X$ and $Y$. Let us consider a process $s$ given by the following (Markovian) Stochastic Differential Equation (SDE)
\begin{align}
ds_u = a(u,s_u)du + b(u,s_u)dw_u, \label{sde}
\end{align}
with $u \geq t$ and $s_t = x$. The unique strong solution of this SDE will be denoted by $s_u^{t,x}$ and the path solution from $t$ to $T$ by $S_T^{t,x}$. Finally, we define the \textit{conditioned expectation} as
\begin{align}
\bE[g(S_T) \ | \ X_t] = \bE[g(X_t \otimes S_{T}^{t, x_t})], \label{conditioned_expec}
\end{align}
for any $X_t \in \Lambda$. One could further show that $\bE[g(S_T) \ | \ S_t] = \bE[g(S_T) \ | \ \cF_t^s]$, $\bP$-a.s, where $\cF_t^s$ is the filtration generated by $s$.

\begin{ass}[Smoothness]\label{obs:smoothness}

We will assume henceforth that every functional considered in this paper is $\Lambda$-continuous and has $\Lambda$-continuous functional derivatives of all orders. This condition could be weakened, but it is outside the scope of this paper. The goal is to focus on the essential arguments that the functional framework brings.

\end{ass}

\begin{obs}\label{obs:result_delta_x}
We will frequently use the following result: if a functional $f \in \bC^{1,2}$ satisfies $f = 0$ for all continuous paths, then $\Delta_x f = 0$ for all continuous paths as well. That is, the functional space derivative on continuous paths of a $\bC^{1,2}$ functional is completely defined by its values on continuous paths. See, for instance, \cite{fito_fournie_thesis}[Theorem 2.2]. 
\end{obs}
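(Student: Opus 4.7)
The plan is to exploit that continuous paths extend to continuous paths, and to use a functional chain rule to read off $\Delta_x f$ from the time derivative of $f$ along such an extension. The key observation is that a continuous extension of $X_t$ has one free parameter, namely its initial velocity, and the chain rule produces a linear identity in that parameter that can only hold if $\Delta_x f(X_t) = 0$.

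Fix a continuous path $X_t \in \Lambda_t$ and an arbitrary $v \in \bR$. Take the $C^1$ extension $Y(u) := x_t + v(u-t)$ on $[t, t+\delta]$ and form the concatenation
\begin{equation*}
\tilde Y_s(u) = \begin{cases} x_u, & u \in [0,t], \\ Y(u), & u \in [t,s], \end{cases}
\qquad s \in [t, t+\delta].
\end{equation*}
Since $X_t$ is continuous and $Y(t)=x_t$, each $\tilde Y_s$ is a continuous path, so by hypothesis $f(\tilde Y_s) \equiv 0$ on $[t,t+\delta]$.

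I would next apply the functional It\^o formula to $f$ along the (deterministic) semimartingale $\tilde Y_s$. Since $Y$ is of bounded variation with zero quadratic variation, it reduces to the functional chain rule
\begin{equation*}
0 \;=\; f(\tilde Y_s) - f(X_t) \;=\; \int_t^s \bigl(\Delta_u f(\tilde Y_u) + \Delta_x f(\tilde Y_u)\, Y'(u)\bigr)\,du.
\end{equation*}
The integrand is continuous in $u$ at $u=t$: a direct computation gives $d_\Lambda(X_t, \tilde Y_u) = (1+|v|)(u-t) \to 0$ as $u\downarrow t$, and $\Delta_t f$, $\Delta_x f$ are $\Lambda$-continuous by the assumption $f\in\bC^{1,2}$. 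Dividing by $s-t$ and letting $s\downarrow t$ therefore yields
\begin{equation*}
\Delta_t f(X_t) + v\,\Delta_x f(X_t) \;=\; 0.
\end{equation*}
Because $v\in\bR$ is arbitrary, both $\Delta_x f(X_t) = 0$ and $\Delta_t f(X_t) = 0$, which is the claim (and a bit more).

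The main obstacle is the chain rule itself: one must justify that the smooth increment added on $[\tau, \tau+\delta\tau]$, which modifies the path on a whole interval rather than only at its endpoint, contributes to first order exactly through $\Delta_x f\cdot Y'$. Concretely, one has to compare $\tilde Y_{\tau+\delta\tau}$ with the flat-then-bumped path $(\tilde Y_\tau)_{\tau,\delta\tau}^{\,v\delta\tau}$, which coincide at the endpoint but differ uniformly by $O(\delta\tau)$ on $[\tau,\tau+\delta\tau]$, and upgrade the mere $\Lambda$-continuity of $f$ to an $o(\delta\tau)$ bound on the resulting difference using the full $\bC^{1,2}$ regularity. This is the content of the functional It\^o formula in Dupire's calculus and is precisely the fact extracted from Fourni\'e's thesis in the reference; once it is in hand, the remainder of the argument is purely linear in $v$.
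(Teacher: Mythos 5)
The paper offers no proof of this remark---it is imported from Fourni\'e's thesis---so there is nothing internal to compare against; your argument is, in substance, the one used in the cited source: extend the continuous path by an affine segment of slope $v$, apply the functional change-of-variables formula to this bounded-variation extension, use $f\equiv 0$ on continuous paths to get $\Delta_t f(X_t) + v\,\Delta_x f(X_t)=0$, and vary $v$. Your computation of $d_\Lambda(X_t,\tilde Y_u)=(1+|v|)(u-t)$ and the limit passage via $\Lambda$-continuity of the derivatives are correct, and you rightly observe that $\Delta_t f(X_t)=0$ comes for free (the flat extension of a continuous path is continuous, so this part needs no It\^o formula at all). The one point that needs care is the invocation of Theorem \ref{fif}: as stated it applies to a continuous semimartingale on $[0,T]$ and gives the expansion from time $0$, whereas your $\tilde Y$ carries an arbitrary continuous prefix $X_t$ on $[0,t]$, which need not be a semimartingale path (it may have unbounded variation and no quadratic variation), so $\tilde Y$ is not literally covered. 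Two standard repairs: either use the increment form of the formula over $[t,s]$ with a frozen arbitrary continuous history---this localized version is what Fourni\'e actually proves, and it is exactly where the $o(\delta\tau)$ estimate you flag in your final paragraph lives---or first run your argument for piecewise-linear $X_t$, for which $\tilde Y$ is a genuine deterministic continuous semimartingale on all of $[0,t+\delta]$, and then pass to general continuous $X_t$ by uniform approximation, using that the approximants are continuous paths (so $\Delta_x f$ vanishes there) together with the assumed $\Lambda$-continuity of $\Delta_x f$. With either patch the proof is complete and consistent with the cited reference.
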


The main results we will use in this paper are the following functional extensions of the It\^o and the Feynman-Kac Formulas:
\begin{teo}[Functional It\^o Formula]\label{fif}
Let $s$ be a continuous semi-martingale and $f \in \bC^{1,2}$. Then, for any $t \in [0,T]$,
$$f(S_t) = f(S_0) + \int_0^t \Delta_t f(S_u) du + \int_0^t \Delta_x f(S_u) ds_u + \frac{1}{2} \int_0^t \Delta_{xx} f(S_u) d\langle s\rangle_u.$$
\end{teo}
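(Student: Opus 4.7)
The plan is to adapt the classical proof of Itô's formula to the functional setting by combining a pathwise time discretization with a two-step decomposition of each increment: one purely horizontal (flat extension in time) and one purely vertical (bump at the new endpoint). The $\Lambda$-continuity and $\bC^{1,2}$ assumptions will let the one-dimensional, non-path-dependent derivatives $\Delta_t f$, $\Delta_x f$ and $\Delta_{xx} f$ carry the entire burden.

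First I would fix $t\in[0,T]$ and a sequence of partitions $\pi_n=\{0=t_0^n<t_1^n<\cdots<t_{k_n}^n=t\}$ with $|\pi_n|\to 0$. Writing telescopically,
\begin{equation*}
f(S_t)-f(S_0)=\sum_{i=0}^{k_n-1}\bigl[f(S_{t_{i+1}})-f(S_{t_i})\bigr],
\end{equation*}
I would insert the intermediate path $S_{t_i,\,t_{i+1}-t_i}$, the flat extension of $S_{t_i}$ up to time $t_{i+1}$, and split each summand as
\begin{equation*}
\bigl[f(S_{t_i,\,t_{i+1}-t_i})-f(S_{t_i})\bigr]+\bigl[f(S_{t_{i+1}})-f(S_{t_i,\,t_{i+1}-t_i})\bigr].
\end{equation*}
Because the two paths in the first bracket differ only by a purely time extension, a one-variable Taylor expansion in the parameter of $\Delta_t f$ gives
\begin{equation*}
f(S_{t_i,\,t_{i+1}-t_i})-f(S_{t_i})=\Delta_t f(S_{t_i})(t_{i+1}-t_i)+o(t_{i+1}-t_i).
\end{equation*}
For the second bracket, the two paths agree on $[0,t_{i+1})$ and differ only at the terminal point by the real increment $h_i:=s_{t_{i+1}}-s_{t_i}$, so a second-order Taylor expansion of a one-variable function produces
\begin{equation*}
f(S_{t_{i+1}})-f(S_{t_i,\,t_{i+1}-t_i})=\Delta_x f(S_{t_i,\,t_{i+1}-t_i})\,h_i+\tfrac{1}{2}\Delta_{xx}f(S_{t_i,\,t_{i+1}-t_i})\,h_i^{2}+R_i,
\end{equation*}
where $R_i$ is controlled by the modulus of continuity of $\Delta_{xx}f$ along the path, using that $f\in\bC^{1,2}$ and that $s$ is continuous.

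Next I would pass to the limit term by term. The sum $\sum_i\Delta_t f(S_{t_i})(t_{i+1}-t_i)$ is a Riemann sum of the $\Lambda$-continuous integrand $u\mapsto\Delta_t f(S_u)$ and converges a.s.\ to $\int_0^t\Delta_t f(S_u)du$. The sum $\sum_i\tfrac12\Delta_{xx}f(S_{t_i,\,t_{i+1}-t_i})h_i^2$ is handled by the standard semi-martingale result that such Riemann–Stieltjes sums against the empirical quadratic variation converge in probability to $\tfrac12\int_0^t\Delta_{xx}f(S_u)d\langle s\rangle_u$, invoking $\Lambda$-continuity of $\Delta_{xx}f$ together with $d_{\Lambda}(S_{t_i,\,t_{i+1}-t_i},S_u)\to 0$ for $u\in[t_i,t_{i+1}]$. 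The sum $\sum_i\Delta_x f(S_{t_i,\,t_{i+1}-t_i})h_i$ is a non-anticipative Riemann sum for a stochastic integral; decomposing $s=M+A$ with $M$ a continuous local martingale and $A$ of finite variation, the martingale part converges in probability to $\int_0^t\Delta_x f(S_u)dM_u$ by the usual isometry argument (again using $\Lambda$-continuity and a localization argument to ensure $\Delta_x f(S_u)$ is locally bounded), and the bounded-variation part converges pathwise to $\int_0^t\Delta_x f(S_u)dA_u$.

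The main obstacle will be the error terms $\sum_iR_i$ and showing that the approximating argument $S_{t_i,\,t_{i+1}-t_i}$ may be replaced by $S_u$ for $u\in[t_i,t_{i+1}]$ inside each sum. Both reductions hinge on the same fact: along continuous paths of a semi-martingale, $d_{\Lambda}(S_{t_i,\,t_{i+1}-t_i},S_{t_{i+1}})\le |h_i|+(t_{i+1}-t_i)\to 0$ uniformly in probability as $|\pi_n|\to0$, so the $\Lambda$-continuity of $\Delta_x f$ and $\Delta_{xx}f$ together with the uniform control $\sum_i h_i^2\to\langle s\rangle_t$ is enough. A routine localization (stopping times making $s$, $\langle s\rangle$ and the functional derivatives along $S$ bounded) upgrades convergence in probability to the a.s.\ pointwise identity claimed in the statement, completing the argument.
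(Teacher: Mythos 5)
The paper does not prove this theorem; it is quoted from Dupire's paper \cite{fito_dupire} (see also \cite{fito_fournie_thesis}), so the comparison below is with the standard proof in that literature, whose overall strategy (partition, horizontal/vertical splitting, Taylor expansion, passage to the limit) you have correctly identified.

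There is, however, a genuine gap in your decomposition. You claim that the paths $S_{t_{i+1}}$ and $S_{t_i,\,t_{i+1}-t_i}$ ``agree on $[0,t_{i+1})$ and differ only at the terminal point by the real increment $h_i$.'' This is false for a non-constant path: the flat extension $S_{t_i,\,t_{i+1}-t_i}$ is frozen at the value $s_{t_i}$ on all of $(t_i,t_{i+1}]$, whereas $S_{t_{i+1}}$ carries the actual trajectory of $s$ there, so the two paths differ on the whole interval $(t_i,t_{i+1}]$, not merely at the endpoint. Consequently the second bracket is \emph{not} of the form $f(X^{h_i})-f(X)$ for any path $X$, and the second-order Taylor expansion in $h_i$ via $\Delta_x f$ and $\Delta_{xx} f$ (which, by definition (\ref{space_deriv}), only see perturbations of the terminal value with the rest of the path held fixed) is not justified as written. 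The same problem appears if you instead insert $(S_{t_{i+1}})^{-h_i}$: then the vertical step is legitimate but the horizontal one is not. This is precisely why the proofs of Dupire and Cont--Fourni\'e do not discretize time along the true path $S$; they first replace $S$ by a piecewise-constant (in time) approximation $S^n$, for which each increment genuinely \emph{is} a horizontal extension followed by a vertical jump, prove the discrete identity there, and only then pass to the limit $S^n\to S$ in $d_\Lambda$, using the $\Lambda$-continuity and local boundedness of $f$, $\Delta_t f$, $\Delta_x f$, $\Delta_{xx} f$, dominated convergence for the $du$ and $d\langle s\rangle_u$ integrals, and a localization/isometry argument for the stochastic integral. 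Your limit-passage paragraph contains most of the ingredients needed for that final step, but without the intermediate piecewise-constant approximation the Taylor expansions at the heart of the argument do not apply.
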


\begin{teo}[Functional Feynman-Kac Formula]\label{feynman-kac}
Let $s$ be a process given by the SDE (\ref{sde}). Consider the functionals $g :\Lambda_T \longrightarrow \bR$ and $k:\Lambda \longrightarrow \bR$ and define
$$f(X_t) = \bE\left[\left.e^{-r(T-t)} g(S_T) + \int_t^T e^{-r(u - t)} k(S_u) du \ \right| \ X_t \right],$$
for any path $X_t \in \Lambda$, $t \in [0,T]$. Thus, if $f \in \bC^{1,2}$ and $k$ is $\Lambda$-continuous, then $f$ satisfies the following Path-dependent Partial Differential Equation (PPDE):
\begin{align*}
\Delta_t f(X_t) + a(t, x_t) \Delta_x f(X_t)  + \frac{1}{2} b^2(t, x_t) \Delta_{xx} f(X_t) - r f(X_t) + k(X_t) = 0,
\end{align*}
with $f(X_T) = g(X_T)$, for any $X_T$ in the topological support of the process $s$.
\end{teo}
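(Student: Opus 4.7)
The plan is to follow the classical martingale approach to Feynman-Kac, replacing the standard It\^o formula with its functional analogue (Theorem \ref{fif}).

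First I would introduce the candidate martingale
$$N_t := e^{-rt} f(S_t) + \int_0^t e^{-ru} k(S_u)\,du, \quad t \in [0,T],$$
and verify directly from the definition of $f$ that $\bE[N_T \mid \cF_t^s] = N_t$. This relies on the tower property together with the identity $\bE[g(S_T) \mid \cF_t^s] = \bE[g(S_T) \mid S_t]$ recorded just after (\ref{conditioned_expec}): splitting $\int_0^T = \int_0^t + \int_t^T$ and pulling the $\cF_t^s$-measurable terms outside the conditional expectation leaves precisely the functional $f(S_t)$ prescribed in the hypothesis, multiplied by the common factor $e^{-rt}$.

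Next I would apply the Functional It\^o Formula to $f(S_t)$ and combine it with the ordinary product rule for the deterministic factor $e^{-rt}$, together with the differential $e^{-rt} k(S_t)\,dt$ coming from the running-cost term. This produces
$$dN_t = e^{-rt}\Big\{\Delta_t f(S_t) + a(t,s_t)\Delta_x f(S_t) + \tfrac{1}{2} b^2(t,s_t)\Delta_{xx} f(S_t) - r f(S_t) + k(S_t)\Big\}dt + e^{-rt} b(t,s_t) \Delta_x f(S_t)\,dw_t.$$
Since $N_t$ is a martingale, uniqueness of the semimartingale decomposition forces the bracketed drift to vanish $dt \otimes d\bP$-almost everywhere along the trajectories of $s$.

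Finally, I would upgrade this almost-sure identity along trajectories to the pointwise PPDE on the topological support of $s$ by invoking the $\Lambda$-continuity of every term in the bracket, which is guaranteed by Assumption \ref{obs:smoothness} together with the continuity of $a$, $b$ and $k$; the identity then extends from the dense set of realized paths to its closure. The terminal condition $f(X_T) = g(X_T)$ is immediate from (\ref{conditioned_expec}) since $S_T^{T,x_T}$ reduces to the single point $\{x_T\}$ and the concatenation satisfies $X_T \otimes S_T^{T,x_T} = X_T$. The main technical hurdle is precisely this last step: transferring a drift equation that only holds along the stochastic trajectories of $s$ to a bona fide pointwise PPDE on $\Lambda$. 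The generous smoothness Assumption \ref{obs:smoothness} is what makes this transfer painless, and any weakening would require a more delicate limiting argument at this point.
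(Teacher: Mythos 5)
The paper does not prove this theorem---it is imported verbatim from the functional It\^o calculus literature (\cite{fito_dupire}, \cite{fito_fournie_thesis})---so there is no in-paper argument to compare against. Your martingale proof is the standard one and is essentially correct: the verification that $N_t = e^{-rt}f(S_t)+\int_0^t e^{-ru}k(S_u)\,du$ is a martingale via the tower property, the application of Theorem \ref{fif} plus the product rule to identify the drift, and the conclusion that the drift vanishes by uniqueness of the semimartingale decomposition are all sound, as is the terminal condition. The one step worth flagging is the final upgrade from ``drift $=0$ for $dt\otimes d\bP$-a.e.\ $(t,\omega)$'' to the pointwise PPDE on the topological support: you correctly identify $\Lambda$-continuity as the mechanism, but note that the references typically avoid the a.e.-to-everywhere passage altogether by fixing a path $X_t$ in the support, using (\ref{conditioned_expec}) to write $\bE[N_{t+h}-N_t\mid X_t]=0$ for the concatenated process, and letting $h\to 0^+$ with the functional It\^o formula applied on $[t,t+h]$; this yields the equation directly at $X_t$ and sidesteps the question of whether the exceptional null set can be chosen uniformly in $t$. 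Either route works under Assumption \ref{obs:smoothness}, together with the implicit integrability needed for $N$ to be a true (not merely local) martingale.
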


As we will observe, the commutation of the time and space functional derivatives plays an important role in the functional It\^o calculus theory, see, additionally, \cite{fito_saporito_greeks}. This will also be seen in the computation of the first-order approximation of path-dependent option prices. We will discuss the commutation issue in the next section.

\subsection{Weakly Path Dependent Functionals}\label{weakly_path_depend_section}

\begin{defi}[Lie Bracket]\label{lie_bracket}
The \textit{Lie bracket} of the operators $\Delta_t$ and $\Delta_x$ is defined as
$$[\Delta_x, \Delta_t] f(X_t) = \Delta_{xt} f(X_t) - \Delta_{tx} f(X_t),$$
where $\Delta_{tx} = \Delta_t \Delta_x$ and $f$ is such that all the derivatives above exist.
\end{defi}

It is a instantaneous measurement of the path-dependence of the functional $f$, i.e. it will be zero if, in the limit, makes no difference the order of the bump and the flat extension of the path, see Figure \ref{fig_bracket}.
\begin{figure}[!ht]
\begin{center}
 \scalebox{0.8}{\includegraphics{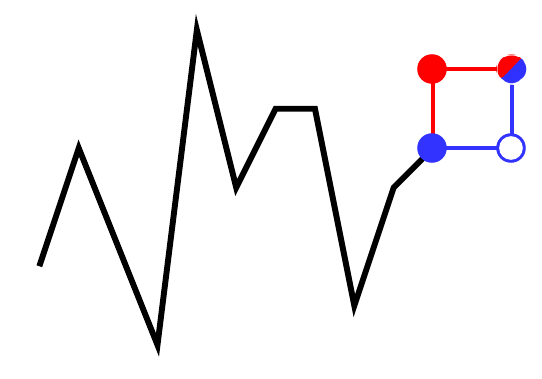}} \caption{Interpretation of the $[\Delta_x, \Delta_t].$} \label{fig_bracket}
\end{center}
\end{figure}

\begin{defi}[Locally Weakly Path Dependent]
A functional $f:\Lambda \longrightarrow \bR$ is called \textit{(locally) weakly path-dependent} if
$$[\Delta_x, \Delta_t] f = 0.$$
\end{defi}

Some examples of functionals should help understand these concepts.
\begin{enumerate}
\item $f(X_t) = h(t,x_t)$, with $h$ smooth. It is clearly weakly path-dependent (it actually is path-independent).
\item $f(X_t) = \ds\int_0^t x_u du$, the time integral of the current path. A simple computation shows
$$[\Delta_x, \Delta_t] f(X_t) = 1,$$
giving then an example of a functional which is not weakly path-dependent.
\item An example of a (locally) weakly path-dependent functional which is not path-independent is $f(X_t) = \ds\int_0^t \int_0^s x_u du ds$, because
$$\Delta_x f(X_t) = 0 \mbox{ and } \Delta_t f(X_t) = \int_0^t x_s ds.$$
\end{enumerate}

\subsection{Path-Dependent Derivative Pricing in the Black--Scholes Model}

The Black--Scholes model assumes that
$$ds_t = rs_t dt + \sigma s_t dw_t.$$
In this case, the topological support of $s$ is the set of continuous paths taking values in the positive real line. Hence, if we denote the price of a derivative with payoff $g$ under this model by $P_{BS}$, we find the path-dependent version of the Black-Scholes PPDE
\begin{align}\label{eq:black_scholes_ppde}
\Delta_t P_{BS}(X_t) + \frac{1}{2} \sigma^2 x_t^2 \Delta_{xx} P_{BS}(X_t) + r (x_t \Delta_x P_{BS}(X_t) - P_{BS}(X_t))  = 0,
\end{align}
for any path $X_t$ in the aforesaid set of paths, and with $P_{BS}(X_T) = g(X_T)$. Henceforth, we define, slightly abusing the notation, the Black--Scholes PPDE operator as
\begin{align}\label{eq:black_scholes_ppde_operator}
\cL_{BS}(\sigma) = \Delta_t + \frac{1}{2} \sigma^2 \bD_2 + r \bD_2  - r \cdot,
\end{align}
with
\begin{align}
\bD_k f(X_t) = x_t^k (\Delta_x)^k f(X_t). \label{eq:D_k_path}
\end{align}

\begin{lema}\label{lem:commutation}
Consider the following path-dependent operator:
$$\cA = \sum_{k=1}^n a_k \bD_k,$$
where $a_i \in \bR$. Then
$$[\Delta_t, \cA]f = 0 \Leftrightarrow [\cL_{BS}(\sigma), \cA]f = -r \cA f$$
\end{lema}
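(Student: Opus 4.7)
The plan is to expand $[\cL_{BS}(\sigma), \cA]f$ by bilinearity of the commutator, using the decomposition $\cL_{BS}(\sigma) = \Delta_t + \tfrac{1}{2}\sigma^2 \bD_2 + r\bD_1 - rI$ read off from \eqref{eq:black_scholes_ppde_operator}. This gives
\[
[\cL_{BS}(\sigma), \cA]f = [\Delta_t, \cA]f + \tfrac{1}{2}\sigma^2 [\bD_2, \cA]f + r[\bD_1, \cA]f + [-rI, \cA]f,
\]
and the task then reduces to showing that all but the first summand assemble to exactly $-r\cA f$.

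The technical heart of the proof is the pairwise commutativity $[\bD_j, \bD_k]f = 0$ for all $j,k \geq 1$. I would derive it starting from the Leibniz-type identity
\[
\Delta_x \bigl(x_t\, h(X_t)\bigr) = h(X_t) + x_t \Delta_x h(X_t),
\]
an immediate consequence of the bump-limit definition of $\Delta_x$. Iterating this rule and expanding both $\bD_j \bD_k f$ and $\bD_k \bD_j f$ produces the same polynomial in $x_t$ and iterated functional derivatives $\Delta_x^m f$; the low-order case $\bD_1 \bD_2 f = \bD_2 \bD_1 f = 2 x_t^2 \Delta_{xx} f + x_t^3 \Delta_{xxx} f$ makes the pattern explicit and serves as the base of a short induction on $j+k$. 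This is the path-dependent analogue of the classical commutativity $[D_j, D_k] = 0$ of the Euler operators $D_k = x^k \partial_x^k$. Because $\cA$ is a constant-coefficient linear combination of $\bD_1, \ldots, \bD_n$, the commutativity of the building blocks immediately yields $[\bD_j, \cA]f = 0$ for every $j \geq 1$, so the $\bD_2$ and $\bD_1$ summands in the expansion drop out.

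What remains is the bookkeeping of the $[-rI, \cA]f$ contribution against the $-r\cdot$ piece of $\cL_{BS}(\sigma)$ acting on $\cA f$, which produces the extra $-r\cA f$ on the right-hand side of the claim; the equivalence then follows by rearrangement. The main obstacle I anticipate is technical rather than conceptual: making the iterated Leibniz computation fully rigorous in the $\bC^{1,2}$ class, since $\Delta_x$ is defined by bumping the terminal value of a c\`adl\`ag path rather than by a classical partial derivative, and one must check that each intermediate functional $x_t^k \Delta_x^k f$ still admits all further $\Lambda$-continuous derivatives that the computation demands. Assumption \ref{obs:smoothness} is designed precisely to ensure this, so once the Leibniz rule for $\Delta_x$ on products $x_t\, h$ is in hand, the rest of the proof is essentially mechanical.
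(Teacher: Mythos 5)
Your decomposition of $\cL_{BS}(\sigma)$ and the reduction to the pairwise commutativity $[\bD_j,\bD_k]f=0$ is exactly the paper's argument (its entire proof is the sentence ``since the operators $\bD_k$'s commute among themselves\dots''), and you add genuine value by justifying that commutativity via the Leibniz rule $\Delta_x\bigl(x_t\,h(X_t)\bigr)=h(X_t)+x_t\Delta_x h(X_t)$ and the induction on $j+k$; your base case reproduces the identity $\bD_1\bD_2=2\bD_2+\bD_3$ of Remark \ref{obs:D1D2_D3}. You also correctly read the operator as $\Delta_t+\tfrac12\sigma^2\bD_2+r\bD_1-r\,\cdot$, silently repairing the $r\bD_2$ typo in (\ref{eq:black_scholes_ppde_operator}).

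The problem is the final ``bookkeeping'' step, which you defer but which cannot close in the way you describe. For any linear operator $\cA$ one has $[-rI,\cA]f=-r\cA f-\cA(-rf)=0$, so once $[\bD_1,\cA]f=[\bD_2,\cA]f=0$ your expansion yields $[\cL_{BS}(\sigma),\cA]f=[\Delta_t,\cA]f$ with no residual term; the equivalence your computation actually delivers is $[\Delta_t,\cA]f=0\Leftrightarrow[\cL_{BS}(\sigma),\cA]f=0$, and there is no mechanism by which the identity piece ``produces the extra $-r\cA f$.'' The $-r\cA f$ on the right-hand side of the lemma as printed appears to be spurious: what the proof of Proposition \ref{prop:sol_PPDE_source} actually invokes is the vanishing of $[\cL_{BS}(\sigma)+r\,\cdot,\cA]f_0$, which again equals $[\cL_{BS}(\sigma),\cA]f_0$ because the multiplication-by-$r$ operator commutes with $\cA$. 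So you should either prove the corrected identity $[\cL_{BS}(\sigma),\cA]f=[\Delta_t,\cA]f$ (and flag the discrepancy with the printed statement), or exhibit explicitly the computation that produces $-r\cA f$ --- which, as matters stand, you cannot, since that contribution is identically zero. Asserting that the remaining step is ``essentially mechanical'' when carrying it out honestly contradicts the statement you are proving is the one real gap here.
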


\begin{proof}
Notice that, since the operators $\bD_k$'s commute among themselves, $[\Delta_t, \cA]f = 0$ is equivalent to $[\cL_{BS}(\sigma), \cA]f = - r \cA f$.
\end{proof}

\begin{prop}\label{prop:sol_PPDE_source}
Define the operator $\cA$ as in lemma above:
$$\cA = \sum_{k=1}^n a_k \bD_k$$
and let $f_0$ be a functional that solves $\cL_{BS}(\sigma)f_0 = 0$. Consider then the PPDE
$$\left\{
\begin{array}{l}
 \cL_{BS}(\sigma) f(X_t) = \psi(t)\cA f_0(X_t), \\ \\
f(X_T) = 0,
\end{array}
\right. $$
for any continuous paths. If $[\Delta_t, \cA]f_0 = 0$ for continuous paths, then
$$f(X_t) = -\left(\int_t^T \psi(u) du\right) \cA f_0(X_t)$$
is a solution of the PPDE above.
\end{prop}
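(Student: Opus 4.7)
The plan is to verify directly that $f(X_t) := -\Psi(t)\,\cA f_0(X_t)$, where $\Psi(t) := \int_t^T \psi(u)\,du$, satisfies both the terminal condition and the PPDE. The terminal condition is immediate, since $\Psi(T) = 0$ gives $f(X_T) = 0$. For the PPDE, my strategy is to apply the operator $\cL_{BS}(\sigma)$ piece by piece to the product $-\Psi(t)\,\cA f_0(X_t)$, exploiting that $\Psi$ depends on $t$ alone.

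Concretely, because $\Psi(t)$ is independent of the path, the functional space derivative $\Delta_x$ annihilates it, so $\bD_k[\Psi(t)\,g] = \Psi(t)\,\bD_k g$ for every functional $g$. The product rule for $\Delta_t$ gives $\Delta_t[-\Psi(t)\,\cA f_0] = \psi(t)\,\cA f_0 - \Psi(t)\,\Delta_t \cA f_0$, while the remaining pieces of $\cL_{BS}(\sigma)$ (the spatial terms and the $-r\,\cdot$ term) simply pull $\Psi(t)$ out as a scalar. Collecting everything yields
\[ \cL_{BS}(\sigma) f(X_t) = \psi(t)\,\cA f_0(X_t) - \Psi(t)\, \cL_{BS}(\sigma) \cA f_0(X_t). \]

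It therefore remains to show that $\cL_{BS}(\sigma)\cA f_0 \equiv 0$ on continuous paths. Since $\cA$ is a polynomial in the $\bD_k$'s and these commute among themselves, $\cA$ commutes with the spatial part of $\cL_{BS}(\sigma)$, and multiplication by $-r$ trivially commutes with any linear operator. Thus the commutator $[\cL_{BS}(\sigma),\cA]f_0$ collapses to $[\Delta_t,\cA]f_0$, which vanishes on continuous paths by hypothesis. Consequently $\cL_{BS}(\sigma)\cA f_0 = \cA\,\cL_{BS}(\sigma) f_0 = 0$, using the assumption on $f_0$, and the PPDE is verified.

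The main (and essentially only) obstacle is the bookkeeping around $\bD_k$: one must remember that $\bD_k g = x_t^k (\Delta_x)^k g$ multiplies by the current path value $x_t$ in addition to taking space derivatives, so the commutation of multiplication-by-$\Psi(t)$ with $\bD_k$ hinges crucially on $\Psi$ being a function of $t$ alone. Everything else is a direct unpacking of the commutation identity already isolated in Lemma~\ref{lem:commutation}, now applied to $\cA f_0$ rather than to $f_0$ itself.
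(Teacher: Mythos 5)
Your proof is correct and follows essentially the same route as the paper: verify the ansatz $-\Psi(t)\,\cA f_0$ directly, pull the time-only factor through the spatial operators via the product rule, and reduce $[\cL_{BS}(\sigma),\cA]f_0$ to $[\Delta_t,\cA]f_0=0$ so that $\cL_{BS}(\sigma)\cA f_0=\cA\,\cL_{BS}(\sigma)f_0=0$. The only cosmetic difference is that the paper carries an undetermined factor $\phi(t)$ and solves $\phi'=\psi$, $\phi(T)=0$ at the end, phrasing the commutation step through Lemma \ref{lem:commutation} as $[\cL_{BS}(\sigma)+r\,\cdot,\cA]f_0=0$, whereas you substitute the explicit $\Psi(t)=\int_t^T\psi(u)\,du$ from the start.
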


\begin{proof}
Define $\tilde{f}(X_t) = \phi(t) \cA f_0(X_t)$, with $\phi(T) = 0$, and notice that, clearly, $\tilde{f}(X_T) = 0$. Moreover, by Lemma \ref{lem:commutation},
\begin{align*}
\cL_{BS}(\sigma) \tilde{f}(X_t) &= (\phi'(t) - r \phi(t)) \cA f_0(X_t) + \phi(t) ( \cL_{BS}(\sigma) + r \cdot) \cA f_0(X_t)\\ 
&=  (\phi'(t) - r \phi(t)) \cA f_0(X_t) +\phi(t) \cA \cancelto{r f_0}{( \cL_{BS}(\sigma) + r \cdot) f_0(X_t)} \\
&+ \phi(t) \cancelto{0}{[\cL_{BS}(\sigma) + r  \cdot, \cA]f_0(X_t)} = \phi'(t) \cA f_0(X_t).
\end{align*}
Hence, solving $\phi'(t) = \psi(t)$ with $\phi(T) = 0$, we conclude the argument.
\end{proof}

\begin{obs}[Path-dependent Vega]\label{obs:vega}
The Vega of a path-dependent option under the Black-Scholes model with price $P_{BS}(X_t,\sigma)$ will be denoted by $\nu(X_t,\sigma) = \ds\partial P_{BS}/\partial \sigma(X_t,\sigma)$. Moreover, by the PPDE (\ref{eq:black_scholes_ppde}), $\nu$ solves the PPDE:
$$\left\{
\begin{array}{l}
  \ds \cL_{BS}(\sigma)\nu(X_t,\sigma) = -\sigma \bD_2 P_{BS}(X_t,\sigma),\\ \\
  \nu(X_T,\sigma) = 0,
\end{array}
\right.$$
where we have differentiated the Black--Scholes PPDE (\ref{eq:black_scholes_ppde}) with respect to the parameter $\sigma$. By the Functional Feynman-Kac Formula, Theorem \ref{feynman-kac}, $\nu$ might be represented as
$$\nu(X_t,\sigma) = \bE\left[ \left.\int_t^T e^{-r(u-t)} \sigma \bD_2 P_{BS}(S_u,\sigma) du \ \right| S_t = X_t \right],$$
where $s$ here follows the Black-Scholes SDE with volatility $\sigma$. Furthermore, if $[\Delta_t, \bD_2]P_{BS}(X_t,\sigma) = 0$, we have the well-known relation:
$$\nu(X_t,\sigma) = (T-t)\sigma \bD_2 P_{BS}(X_t,\sigma),$$
see Proposition \ref{prop:sol_PPDE_source}. The aforementioned commutation condition is verified, for instance, for path-independent option prices, giving us the well-known relation between the Gamma and the Vega.
\end{obs}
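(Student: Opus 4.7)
The plan is to establish the three claims of the remark in sequence, each by a short appeal to the machinery already developed in this section. For the PPDE claim, I would differentiate the path-dependent Black--Scholes identity $\cL_{BS}(\sigma)\, P_{BS}(X_t,\sigma) = 0$ (equation (\ref{eq:black_scholes_ppde})) with respect to the parameter $\sigma$. Since $\cL_{BS}(\sigma)$ depends on $\sigma$ only through the term $\tfrac12 \sigma^2 \bD_2$, the chain rule gives
$$\sigma\, \bD_2 P_{BS}(X_t,\sigma) + \cL_{BS}(\sigma)\, \nu(X_t,\sigma) = 0,$$
which is exactly the advertised PPDE upon rearrangement. The terminal condition $\nu(X_T,\sigma) = 0$ follows because $P_{BS}(X_T,\sigma) = g(X_T)$ is $\sigma$-free. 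The interchange of $\partial/\partial\sigma$ with the functional derivatives $\Delta_t, \Delta_x, \Delta_{xx}$ is permitted under the blanket Assumption \ref{obs:smoothness}.

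For the Feynman--Kac representation, I would apply Theorem \ref{feynman-kac} to the PPDE just obtained, with terminal functional identically zero and running source $k(X_u) := \sigma\, \bD_2 P_{BS}(X_u,\sigma)$, taking the underlying dynamics to be the Black--Scholes SDE of volatility $\sigma$ used to define $P_{BS}$. The $\Lambda$-continuity of $k$ required by the theorem is supplied by Assumption \ref{obs:smoothness}. This directly yields the stated conditional expectation, the factor $e^{-r(u-t)}$ entering through the Feynman--Kac running integrand.

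For the closed-form identity $\nu(X_t,\sigma) = (T-t)\sigma\, \bD_2 P_{BS}(X_t,\sigma)$, I would invoke Proposition \ref{prop:sol_PPDE_source} with $\cA = \bD_2$ (a single-term operator of the form it requires), source coefficient $\psi(t) \equiv -\sigma$, and seed functional $f_0 = P_{BS}(\cdot,\sigma)$, which by construction solves $\cL_{BS}(\sigma) f_0 = 0$. The commutation hypothesis $[\Delta_t, \bD_2] P_{BS} = 0$ is precisely the input demanded by the proposition, whose explicit solution formula then produces
$$\nu(X_t,\sigma) = -\Bigl(\int_t^T (-\sigma)\, du\Bigr) \bD_2 P_{BS}(X_t,\sigma) = \sigma (T-t)\, \bD_2 P_{BS}(X_t,\sigma).$$

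No serious obstacle is anticipated: the only technical subtlety is the interchange of $\partial/\partial\sigma$ with the functional derivatives in the first step, which is dispatched by the smoothness assumption. Everything else is a direct appeal to Theorem \ref{feynman-kac} and Proposition \ref{prop:sol_PPDE_source}, already proved above.
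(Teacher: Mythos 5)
Your proposal is correct and follows exactly the route the paper itself takes inside the remark: differentiate the Black--Scholes PPDE (\ref{eq:black_scholes_ppde}) in $\sigma$ to get the source equation, apply Theorem \ref{feynman-kac} with zero terminal data and running source $\sigma \bD_2 P_{BS}$, and invoke Proposition \ref{prop:sol_PPDE_source} with $\cA=\bD_2$, $\psi\equiv-\sigma$ for the closed form. The only point both you and the paper leave implicit is that identifying $\nu$ with the particular solution produced by Proposition \ref{prop:sol_PPDE_source} tacitly uses uniqueness of the PPDE solution, which is supplied by the Feynman--Kac representation you already wrote down.
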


\subsection{Formal Derivation of the Functional First-Order Approximation}

Fix a maturity $T$ and a payoff functional $g:\Lambda_T \longrightarrow \bR$. Since $(s, y^\eps, z^\delta)$ is a Markovian process, the no-arbitrage price of this path-dependent European derivative depends on the realized path of $s$, but only at the spot values of $y^\eps$ and $z^\delta$. The only source of path-dependence is in the payoff $g$ and hence $s$ is the only variable the knowledge of its current path is necessary. Hence, the no-arbitrage price of this path-dependent European derivative under the model (\ref{eq:sde_risk_neutral_intro}) is given by
$$P^{\eps,\delta}(X_t,y,z) = \bE[e^{-r(T-t)} g(S_T) \ | \ S_t = X_t, y^{\eps}_t = y, z^{\delta}_t = z].$$

\begin{obs}
Under some mild conditions on the coefficients and the payoff function $g$, the functional $P^{\eps,\delta}$ belongs to $\bC^{1,2}$. Additionally, as stated in Remark \ref{obs:smoothness}, we will assume $P^{\eps, \delta}$ is as smooth as needed in the computations that follow.
\end{obs}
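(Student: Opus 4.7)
The plan is to reduce the functional regularity question to the regularity of the three-dimensional Markovian flow $(s, y^\eps, z^\delta)$ via the conditioning identity (\ref{conditioned_expec}). Writing
\begin{equation*}
P^{\eps,\delta}(X_t, y, z) = \bE\bigl[e^{-r(T-t)}\, g\bigl(X_t \otimes S_T^{t, x_t, y, z}\bigr)\bigr],
\end{equation*}
where $S_T^{t, x, y, z}$ denotes the trajectory of $s$ under the joint SDE (\ref{eq:sde_risk_neutral_intro}) on $[t,T]$ started from $(x, y, z)$ at time $t$, the path argument $X_t$ enters only through the concatenation on the left, while the variables that must be differentiated (the endpoint $x_t$, the parameters $y, z$, and the time $t$ through the flat extension) enter only as initial data of a classical Markov diffusion. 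Existence and continuity of the required derivatives then follow from standard smoothness of Markovian flows.

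For $\Lambda$-continuity of $P^{\eps,\delta}$, one combines a modulus-of-continuity assumption on $g$ in the $\sup$-norm with the usual $L^p$-continuity estimate for $(t, x, y, z) \mapsto S_T^{t, x, y, z}$; since the concatenation operator is Lipschitz in the $\sup$-norm, the estimate closes under the metric $d_\Lambda$. For the space derivatives $\Delta_x P^{\eps,\delta}$ and $\Delta_{xx} P^{\eps,\delta}$, the bumped path $X_t^h$ differs from $X_t$ only at the endpoint, so both reduce to classical first and second derivatives in $x$ of $\bE[e^{-r(T-t)} g(X_t \otimes S_T^{t, x, y, z})]$; under Fr\'echet smoothness of $g$ in the sup-norm with polynomially bounded derivatives, differentiation under the expectation is justified by dominated convergence together with mean-square differentiability of the flow $x \mapsto S^{t, x, y, z}$ (or, if $g$ is only Lipschitz, by a Bismut--Elworthy--Li integration-by-parts formula transferring the derivative onto the Brownian integrand). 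For $\Delta_t P^{\eps,\delta}$, the flat extension $X_{t, \delta t}$ holds the endpoint constant over $[t, t+\delta t]$; applying the Markov property of $(s, y^\eps, z^\delta)$ to rewrite $P^{\eps,\delta}(X_t, y, z)$ as an expectation over paths restarted at time $t + \delta t$, subtracting $P^{\eps,\delta}(X_{t, \delta t}, y, z)$, and expanding by the classical It\^o formula in the finite-dimensional $(s, y^\eps, z^\delta)$ coordinates yields the time derivative as a linear combination of the already-controlled derivatives in $(x, y, z)$.

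The main obstacle is the $\Lambda$-continuity of the derivatives themselves rather than their pointwise existence: the metric $d_\Lambda$ combines a uniform distance with a change in the time horizon, so passing to the limit as $\delta t, h \to 0$ requires the preceding estimates to be uniform over a $d_\Lambda$-neighbourhood, and the concatenation with the frozen past $X_t$ interacts non-trivially with differentiation in the endpoint. This is precisely the technical step that Assumption \ref{obs:smoothness} is designed to sidestep, which explains why the remark is phrased ``under some mild conditions'' rather than as a formal theorem. A fully rigorous proof would follow the approach of \cite{fito_fournie_thesis}: impose Fr\'echet smoothness of $g$ in the sup-norm with polynomially bounded derivatives, assume enough regularity of the coefficients $\alpha, \beta, c, g, f$ for the joint flow to be twice mean-square differentiable, and verify all required $\Lambda$-continuities simultaneously via the flow representation above.
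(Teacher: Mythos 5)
The paper offers no proof of this remark: it is stated as an assertion and its content is then effectively absorbed into Assumption \ref{obs:smoothness}, which simply postulates that every functional in the paper has $\Lambda$-continuous derivatives of all orders. So there is no argument of the author's to compare yours against; what you have written is an attempt to supply a justification the paper deliberately omits. Your strategy --- represent $P^{\eps,\delta}$ through the conditioning identity (\ref{conditioned_expec}) and the Markovian flow of (\ref{eq:sde_risk_neutral_intro}), obtain $\Delta_x$ and $\Delta_{xx}$ by differentiating under the expectation in the initial datum, and obtain $\Delta_t$ from the Markov property plus the classical It\^o expansion --- is the standard and correct route (it is essentially how Fourni\'e's thesis and subsequent work on path-dependent PDEs proceed), and you correctly identify the genuine crux: the $\Lambda$-continuity of the derivatives, which requires the pointwise estimates to hold uniformly over $d_\Lambda$-neighbourhoods.

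That said, the proposal is a roadmap rather than a proof, and two of the gaps are more than technical. First, the bumped path $X_t^h$ changes not only the initial condition of the flow on $[t,T]$ but also the value of the concatenated path at the single time $t$; your claim that $\Delta_x$ ``reduces to classical first and second derivatives in $x$'' of the flow representation is only valid when $g$ is insensitive to a modification of the path at one time instant (true for integral functionals such as the Asian payoff, false for functionals like the running maximum --- which is precisely why the paper excludes barrier options in its conclusion). This case distinction should be made explicit, since it is where the ``mild conditions on $g$'' actually bite. Second, the remark as used in the paper needs far more than $\bC^{1,2}$: the formal expansion differentiates $P_0$ up to third order in space (via $\bD_1\bD_2$) and the accuracy argument requires higher-order terms still, so the dominated-convergence and flow-differentiability steps must be iterated to arbitrary order, demanding $C^\infty$ coefficients and Fr\'echet-$C^\infty$ payoffs. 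Your closing paragraph honestly concedes this, but as it stands the argument establishes (at best) the letter of the remark and not the smoothness actually consumed downstream; a complete treatment would also need to track how the derivative bounds depend on $(\eps,\delta)$, since the coefficients of $y^\eps$ are of order $\eps^{-1}$.
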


We denote the functional infinitesimal generator of $(s,y^{\eps},z^{\delta})$ by $\cL_{(s,y,z)}^{\eps,\delta}$ and write $\cL^{\eps,\delta}$ as:
\begin{align}
\cL^{\eps,\delta} &= \Delta_t + \cL_{(s,y,z)}^{\eps,\delta} - r \cdot \label{eq:pricing_PPDE}\\
&= \frac{1}{\eps} \cL_0 + \frac{1}{\sqrt{\eps}} \cL_1 + \cL_2 + \sqrt{\delta} \cM_1 + \delta \cM_2 + \sqrt{\frac{\delta}{\eps}} \cM_3, \nonumber
\end{align}
where
\begin{align}
\cL_0 &= \alpha(y) \frac{\partial}{\partial y} + \frac{1}{2} \beta^2(y) \frac{\partial^2}{\partial y^2}, \label{eq:path_cL_0} \\
\cL_1 &= \beta(y) \left( \rho_1 f(y,z) \frac{\partial}{\partial y} \bD_1 -\Gamma_1(y,z) \frac{\partial}{\partial y}\right), \label{eq:path_cL_1} \\
\cL_2 &= \Delta_t + \frac{1}{2} f^2(y,z) \bD_2 + r \bD_1 - r \cdot, \label{eq:path_cL_2} \\
\cM_1 &= g(z) \left(\rho_2 f(y,z) \frac{\partial}{\partial z}\bD_1 - \Gamma_2(y,z) \frac{\partial}{\partial z} \right), \label{eq:path_cM_1} \\
\cM_2 &= c(z) \frac{\partial}{\partial z} + \frac{1}{2} g^2(z) \frac{\partial^2}{\partial z^2}, \label{eq:path_cM_2} \\
\cM_3 &= \rho_{12} \beta(y)g(z) \frac{\partial^2}{\partial y \partial z}, \label{eq:path_cM_3}
\end{align}
Hence, the Functional Feynman-Kac Formula, Theorem \ref{feynman-kac}, implies that $P^{\eps, \delta}$ satisfies the the following PPDE
\begin{align}\label{eq:ppde_S}
\left\{
\begin{array}{l}
 \cL^{\eps,\delta}P^{\eps,\delta}(X_t,y,z) = 0, \\ \\
 P^{\eps,\delta}(X_T,y,z) = g(X_T).
\end{array}
\right.
\end{align}

The functional differential operators (\ref{eq:path_cL_0})--(\ref{eq:path_cM_3}) can also be described in words, which shall help the reader understand how the elements of the model work separately in the PPDE (\ref{eq:pricing_PPDE}):
\begin{itemize}

\item $\cL_0$ is the infinitesimal generator of $y^1$ under $\bP$;

\item $\cL_1$ is composed by a term due to the correlation of the stock price and the fast factor, and a term due to part of the market price of volatility risk;

\item $\cL_2$ is the path-dependent Black--Scholes operator with volatility $f(y,z)$;

\item $\cM_1$ is composed by a term due to the correlation of the stock price and the slow factor;

\item $\cM_2$ is the infinitesimal generator of $z^1$ under $\bP$;

\item $\cL_3$ is composed by a unique term due to the correlation of fast and slow factors.

\end{itemize}

These are virtually the same operators as in the classical case. The only difference is the presence of the functional derivatives $\Delta_t$ and $\Delta_x$ instead of the partial derivatives $\partial/\partial t$ and $\partial/\partial x$. Notice that $\cL_0$, $\cM_2$ and $\cM_3$ are standard differential operators and $\cL_1$, $\cL_2$ and $\cM_1$ are functional differential operators.

\begin{obs}[Commutation]

It is important to notice that although $\Delta_t$ and $\Delta_x$ do not commute (see Section \ref{weakly_path_depend_section}), $\partial/\partial y$ and $\partial/\partial z$ commute with both $\Delta_t$ and $\Delta_x$.

\end{obs}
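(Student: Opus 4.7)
The plan is to reduce each commutation identity to a Schwarz-type interchange-of-limits statement, exploiting the fact that the functional derivatives $\Delta_t,\Delta_x$ act only on the path argument $X_t$ of a functional $F(X_t,y,z)$, while the ordinary partials $\partial/\partial y$ and $\partial/\partial z$ act only on the scalar arguments. Since the flat extension $X_{t,\delta t}$ and the bump $X_t^h$ are constructed from $X_t$ alone---with no reference to $y$ or $z$ in the defining expressions (\ref{time_deriv})--(\ref{space_deriv})---these two kinds of derivatives are perturbations along independent coordinates of $\Lambda\times\bR^2$.

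First, for $[\partial/\partial y,\Delta_t]F=0$, write
$$\Delta_t F(X_t,y,z)=\lim_{\delta t\to 0^+}\frac{F(X_{t,\delta t},y,z)-F(X_t,y,z)}{\delta t}.$$
Because $X_{t,\delta t}$ is independent of $y$, the partial $\partial/\partial y$ of the difference quotient equals the difference quotient of $\partial F/\partial y$. Under Assumption \ref{obs:smoothness}, $\partial F/\partial y$ is itself as smooth as needed, so $\Delta_t(\partial F/\partial y)$ exists and is $\Lambda$-continuous jointly in $(X_t,y,z)$; this provides the local uniformity in $\delta t$ required to interchange $\partial/\partial y$ with the limit, yielding $\partial_y\Delta_t F=\Delta_t\partial_y F$. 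The same argument works with $\partial/\partial z$ in place of $\partial/\partial y$.

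Next, for $[\partial/\partial y,\Delta_x]F=0$, observe that the bumped path $X_t^h$ is defined pointwise from the values $x_s$ (for $s<t$) and $x_t+h$ (at $u=t$), again with no dependence on $y$ or $z$. Consequently,
$$\Delta_x F(X_t,y,z)=\lim_{h\to 0}\frac{F(X_t^h,y,z)-F(X_t,y,z)}{h},$$
and the identical exchange-of-limits argument gives $\partial_y\Delta_x F=\Delta_x\partial_y F$; swapping the roles of $y$ and $z$ leaves the reasoning unchanged.

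The only delicate step is justifying the interchange of partial derivative and limit, and this is guaranteed by the blanket smoothness hypothesis (Assumption \ref{obs:smoothness}), which supplies enough joint regularity in the path and scalar variables to invoke a standard Schwarz-type theorem. Notably, at no point does one need to swap the non-commuting pair $\Delta_t$ and $\Delta_x$, which is precisely why the obstruction of Section \ref{weakly_path_depend_section} does not manifest here.
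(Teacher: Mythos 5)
The paper states this remark without proof, treating it as immediate, and your argument supplies exactly the justification the author leaves implicit: the flat extension and the bump modify only the path coordinate, while $\partial/\partial y$ and $\partial/\partial z$ act only on the scalar coordinates, so the two kinds of derivatives are perturbations along independent variables and the difference quotients in (\ref{time_deriv}) and (\ref{space_deriv}) pass through $\partial/\partial y$ and $\partial/\partial z$ verbatim. Your proof is correct at the paper's own level of rigor; the one genuinely delicate point, interchanging $\partial/\partial y$ with the limits, requires locally uniform convergence of the differentiated difference quotients rather than mere existence and $\Lambda$-continuity of $\Delta_t \partial_y F$, but you flag this explicitly and it is absorbed by the blanket ``as smooth as needed'' hypothesis of Assumption \ref{obs:smoothness}, consistent with how the paper handles all such regularity issues.
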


We will now formally derive the first-order approximation $P^{\eps,\delta}$. Write $P^{\eps,\delta}$ in powers of $\sqrt{\delta}$:
$$P^{\eps,\delta} = P^{\eps}_0 + \sqrt{\delta} P^{\eps}_1 + \cdots,$$
and then we choose $P^{\eps}_0$ and $P^{\eps}_1$ to satisfy

\begin{align}
&\left\{
\begin{array}{l}
 \ds \left(\frac{1}{\eps}\cL_0 + \frac{1}{\sqrt{\eps}} \cL_1 + \cL_2\right)P^{\eps}_0(X_t,y,z)= 0, \\ \\
 P^{\eps}_0(X_T,y,z) = g(X_T),
\end{array}
\right. \label{eq:path_pde_p0eps} \\ \nonumber\\
&\left\{
\begin{array}{l}
 \ds\left(\frac{1}{\eps} \cL_0 + \frac{1}{\sqrt{\eps}} \cL_1 + \cL_2\right)P^{\eps}_1 = -\left(\cM_1 + \frac{1}{\sqrt{\eps}} \cM_3 \right)P_0^{\eps},\\ \\
 P^{\eps}_1(X_T,y,z) = 0.
\end{array}
\right. \label{eq:path_pde_p_1eps}\\ \nonumber
\end{align}

\subsubsection{Computing $P_0$}

Expand $P^{\eps}_0$ in powers of $\sqrt{\eps}$:
$$P_0^{\eps} = \sum_{m \geq 0} (\sqrt{\eps})^m P_{m,0},$$
where we denote $P_{0,0}$ by $P_0$. Substitute now this expansion into Equation (\ref{eq:path_pde_p0eps}) to get the following PPDEs
$$\begin{array}{rl}
(-1,0):& \cL_0P_0 = 0,\\
(-1/2,0):& \cL_0P_{1,0}+ \cL_1P_0 = 0,\\
(0,0):& \cL_0P_{2,0}+ \cL_1P_{1,0} + \cL_2P_0= 0,\\
(1/2,0):& \cL_0P_{3,0} + \cL_1 P_{2,0}+ \cL_2P_{1,0} = 0,
\end{array}$$
where we are using the notation $(i,j)$ to denote the term of $i$th order in $\eps$ and $j$th in $\delta$.

It will be paramount in the computations that follows to notice that $\cL_0$ is an usual differential operator. Hence, the first Equation above is actually a PDE and the arguments $(X_t,z)$ should be understood as parameters in this equation. The reasoning used in this section follows the steps of the classical case described in \cite{multiscale_fouque_new_book}.

We take $P_0 = P_0(X_t,z)$ independent of $y$ in order to satisfy the first PDE. Since $\cL_1$ takes derivative with respect to $y$ in all its terms, $\cL_1P_0 = 0$ and then the second equation becomes the PDE $\cL_0 P_{1,0} = 0$. Thus, we take $P_{1,0} = P_{1,0}(X_t,z)$ also independent of $y$. The 0-order term gives us
\begin{align}\label{eq:poisson_p_20}
\cL_0P_{2,0} + \cancelto{\scriptstyle 0}{\cL_1 P_{1,0}} + \cL_2P_0 = 0,
\end{align}
which is a Poisson PDE for $P_{2,0}$ with solvability condition:
$$\langle \cL_2P_0 \rangle = 0,$$
where $\langle \cdot \rangle$ is the average under the invariant measure of $\cL_0$. Note again that $(X_t,z)$ is seen as parameters here. Since $P_0$ does not depend on $y$, the solvability condition becomes
$$\langle \cL_2 \rangle P_0 = 0.$$
Again, we would like to point it out again that $\cL_0$ is a differential operator in the classical sense, since no functional derivatives are present. Hence, all the results regarding the Poisson PDE hold. 

Using the definition of the path-dependent Black--Scholes differential operator given in Equation (\ref{eq:black_scholes_ppde_operator}), we shall choose $P_0$ to satisfy the PPDE
\begin{align}
\left\{
\begin{array}{l}
 \cL_{BS}(\overline{\sigma}(z)) P_0(X_t,z) = 0, \\ \\
 P_0(X_T,z) = g(X_T),
\end{array}
\right. \label{eq:P0_path}
\end{align}
where $\overline{\sigma}^2(z) = \langle f^2(\cdot,z) \rangle$. Notice we can write $P_0(X_t,z) = P_{BS}(X_t;\overline{\sigma}(z))$, i.e. $P_0$ is the price of the path-dependent option with payoff $g$ and maturity $T$ under the Black-Scholes model:
\begin{align}\label{eq:black_scholes_sigma_bar_path}
d\overline{s}_t = r\overline{s}_t dt + \overline{\sigma}(z) \overline{s}_t d\overline{w}_t^{(0)}.
\end{align}

\subsubsection{Computing $P^{\eps}_{1,0}$}

By the Poisson PDE (\ref{eq:poisson_p_20}), we find
\begin{align}\label{eq:P20_path}
P_{2,0}(X_t,y,z) = -\cL_0^{-1}(\cL_{BS}(f(y,z)) - \cL_{BS}(\overline{\sigma}(z)))P_0(X_t,z) + c(X_t,z),
\end{align}
for some functional $c$ which does not depend on $y$. Notice
$$\cL_{BS}(f(y,z)) - \cL_{BS}(\overline{\sigma}(z)) = \frac{1}{2}(f^2(y,z) - \overline{\sigma}^2(z)) \bD_2.$$
Denote now by $\phi(y,z)$ the solution of the Poisson equation
\begin{align}\label{eq:path_poisson_phi}
\cL_0 \phi(y,z) = f^2(y,z) - \overline{\sigma}^2(z),
\end{align}
which implies
\begin{align*}
\cL_0^{-1}(\cL_{BS}(f(y,z)) - \cL_{BS}(\overline{\sigma}(z))) &= \frac{1}{2} \phi(y,z) \bD_2.
\end{align*}
Using the $1/2$-order PPDE, we get then the solvability condition
$$\langle \cL_1 P_{2,0} + \cL_2 P_{1,0}\rangle = 0.$$
Hence, by the Equation (\ref{eq:P20_path}) for $P_{2,0}$,
\begin{align*}
\cL_1 P_{2,0} &= - \cL_1 \left( \cL_0^{-1}(\cL_{BS}(f(y,z)) - \cL_{BS}(\overline{\sigma}(z)))P_{S_0}\right) \\
&= -\cL_1 \left(\frac{1}{2} \phi(y,z) \bD_2 \right)P_0 \\
&= \beta(y) \left( \rho_1 f(y,z) \frac{\partial}{\partial y} \bD_1 -\Gamma_1(y,z) \frac{\partial}{\partial y}\right)\frac{1}{2} \phi(y,z)\bD_2 P_0 \\
&= \left(\frac{1}{2}  \rho_1 \beta(y)f(y,z) \frac{\partial\phi}{\partial y}(y,z)\right) \bD_1 \bD_2 - \left(\frac{1}{2}\beta(y)\Gamma_1(y,z) \frac{\partial\phi}{\partial y}(y,z)\right) \bD_2P_0.
\end{align*}
Therefore, $P_{1,0}^{\eps} = \sqrt{\eps}P_{1,0}$ satisfies the following PPDE:
$$\left\{
\begin{array}{l}
 \ds \cL_{BS}(\overline{\sigma}(z))P^{\eps}_{1,0}(X_t,z) = -\cA^{\eps} P_0(X_t,z),\\ \\
 P^{\eps}_{1,0}(X_T,z) = 0,
\end{array}
\right.$$
where
\begin{align}
\cA^{\eps} &= V_3^{\eps}(z) \bD_1 \bD_2 + V_2^{\eps}(z) \bD_2, \label{eq:path_Aeps} \\
V_3^{\eps}(z) &= - \frac{\rho_1\sqrt{\eps}}{2} \left\langle \beta \ f(\cdot,z) \frac{\partial\phi}{\partial y}(\cdot,z) \right\rangle, \label{eq:path_V3eps} \\
V_2^{\eps}(z) &= \frac{\sqrt{\eps}}{2} \left\langle \beta \ \Gamma_1(\cdot,z) \frac{\partial\phi}{\partial y}(\cdot,z) \right\rangle. \label{eq:path_V2eps}
\end{align}
The only difference between this PPDE and the PDE of the classical case, Equation (\ref{eq:pde_p_1_eps_intro}), is that $\bD_k$ now involves the functional space derivative $\Delta_x$ and $\cL_{BS}$ is the functional version of the Black-Scholes differential operator.

\begin{obs}\label{obs:D1D2_D3}
Notice that $\bD_1 \bD_2 = 2 \bD_2 + \bD_3$ and we may rewrite
$$\cA^{\eps} = V_3^{\eps}(z)\bD_3 + (V_2^{\eps}(z) + 2V_3^{\eps}(z)) \bD_2.$$
\end{obs}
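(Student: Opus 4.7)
The claim is a short operator identity, so the plan is essentially to verify $\bD_1 \bD_2 = 2\bD_2 + \bD_3$ directly from the definitions, and then substitute this into the expression \eqref{eq:path_Aeps} for $\cA^{\eps}$.

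First, I would unpack the operators. By definition \eqref{eq:D_k_path}, $\bD_2 f(X_t) = x_t^2 \Delta_{xx} f(X_t)$, so
$$\bD_1 \bD_2 f(X_t) = x_t \,\Delta_x\!\bigl(x_t^2\, \Delta_{xx} f(X_t)\bigr).$$
The key step is a Leibniz rule for $\Delta_x$ applied to the product of the functional $X_t \mapsto x_t^2$ and the functional $\Delta_{xx}f$. I would derive this directly from the limit definition \eqref{space_deriv}: writing
$$\frac{(fg)(X_t^h) - (fg)(X_t)}{h} = f(X_t^h)\,\frac{g(X_t^h) - g(X_t)}{h} + \frac{f(X_t^h) - f(X_t)}{h}\,g(X_t),$$
and invoking $\Lambda$-continuity of $f$ (Assumption \ref{obs:smoothness}), the product rule $\Delta_x(fg) = (\Delta_x f)g + f(\Delta_x g)$ follows. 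A one-line calculation from the bump definition also gives $\Delta_x(x_t^2) = 2 x_t$, since $(x_t + h)^2 - x_t^2 = 2 h x_t + h^2$.

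Combining these two observations,
$$\bD_1\bD_2 f(X_t) = x_t\bigl[2 x_t\, \Delta_{xx} f(X_t) + x_t^2\, \Delta_{xxx} f(X_t)\bigr] = 2 x_t^2 \Delta_{xx} f(X_t) + x_t^3 \Delta_{xxx} f(X_t) = 2\bD_2 f(X_t) + \bD_3 f(X_t).$$
Substituting into $\cA^{\eps} = V_3^{\eps}(z)\bD_1\bD_2 + V_2^{\eps}(z)\bD_2$ and collecting yields the announced form
$$\cA^{\eps} = V_3^{\eps}(z)\bD_3 + \bigl(V_2^{\eps}(z) + 2V_3^{\eps}(z)\bigr)\bD_2.$$

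There is no genuine obstacle here; the only minor point worth being careful about is that $\Delta_x$ acts on \emph{functionals} rather than on a spatial variable, so the product rule and the computation of $\Delta_x(x_t^2)$ must be justified via the bumped-path definition rather than taken for granted from ordinary calculus. Once this is in place, the identity is purely algebraic.
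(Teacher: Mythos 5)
Your verification is correct and is exactly the computation the paper has in mind: the remark is stated without proof because it is the functional transcription of the classical identity $D_1 D_2 = 2D_2 + D_3$, and your Leibniz-rule argument for $\Delta_x$ applied to $x_t^2\,\Delta_{xx}f$ (justified from the bumped-path definition) is the right way to make it rigorous. Nothing further is needed.
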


By the Functional Feynman-Kac Formula, Theorem \ref{feynman-kac}, we can write
\begin{align}
P^{\eps}_{1,0}(X_t,z) = \bE\left[ \left.\int_t^T e^{-r(u-t)} \cA^{\eps} P_0(\overline{S}_u,z) du \ \right| \ \overline{S}_t = X_t \right],\label{eq:P10_eps_path}
\end{align}
where $\overline{s}$ follows the Black--Scholes dynamics with volatility $\overline{\sigma}(z)$ as in (\ref{eq:black_scholes_sigma_bar_path}).

It is very important to notice that $V_2^{\eps}(z)$ and $V_3^{\eps}(z)$ are the same constants as in the path-independent case described in Section \ref{sec:first_order}. This means that once these parameters are calibrated to European vanilla options, we could use these same numeric values to price path-dependent options. The same is true for the $P^{\delta}_{0,1}$, which will be shown next section.

\subsubsection{Computing $P^{\delta}_{0,1}$}

Let us now expand $P_1^{\eps}$ in powers of $\sqrt{\eps}$,
$$P_1^{\eps} = \sum_{m \geq 0} (\sqrt{\eps})^m P_{m,1},$$
and then substitute this and the expansion for $P_0^{\eps}$ into Equation (\ref{eq:path_pde_p_1eps}). Doing so, we find
$$\begin{array}{rl}
(-1,-1/2):& \cL_0 P_{0,1} = 0,\\
(-1/2,-1/2):& \cL_0 P_{1,1}+ \cL_1 P_{0,1} + \cM_3 P_0 = 0,\\
(0,-1/2):& \cL_0 P_{2,1}+ \cL_1 P_{1,1} + \cL_2 P_{0,1} + \cM_1 P_0 + \cM_3P_{1,0} = 0.
\end{array}$$
Note that all the terms in $\cL_0$, $\cL_1$ and $\cM_3$ take derivative with respect to $y$. So, the first PDE is satisfied if we choose $P_{0,1} = P_{0,1}(X_t,z)$, as it was done previously. Now, the second PPDE turns to be the PDE $\cL_0P_{1,1} = 0$, and then we choose $P_{1,1} = P_{1,1}(X_t,z)$ independent of $y$ as well. Finally, the last PPDE becomes
$$\cL_0 P_{2,1} + \cL_2 P_{0,1} + \cM_1 P_0 = 0,$$
which is a Poisson PDE for $P_{2,1}$, and its solvability condition is
$$\langle \cL_2 P_{0,1} + \cM_1 P_0 \rangle = 0.$$
Thus, if we write $P^{\delta}_{0,1}(X_t,z) = \sqrt{\delta}P_{0,1}(X_t,z)$, this condition can be written as
$$ \ds \cL_{BS}(\overline{\sigma}(z)) P^{\delta}_{0,1} = -\sqrt{\delta} \langle \cM_1 \rangle P_0,$$
where one can compute
$$\langle \cM_1 \rangle = -g(z) \langle \Gamma_2(\cdot,z) \rangle \frac{\partial}{\partial z} + \rho_2 g(z) \langle f(\cdot,z) \rangle \bD_1 \frac{\partial}{\partial z}.$$
Therefore, if we define
\begin{align}
\cA^{\delta} &= - V_0^{\delta}(z) \frac{\partial}{\partial \sigma}  -V_1^{\delta}(z) \bD_1\frac{\partial}{\partial \sigma} \label{eq:path_Adelta} \\
V_1^{\delta} &= \frac{\rho_2g(z)\sqrt{\delta}}{2} \langle f(\cdot,z) \rangle \overline{\sigma}'(z) \label{V1delta} \\
V_0^{\delta} &= -\frac{g(z)\sqrt{\delta}}{2} \langle \Gamma_2(\cdot,z) \rangle \overline{\sigma}'(z) \label{V0delta}
\end{align}
we have the following PPDE:
$$\left\{
\begin{array}{l}
 \ds \cL_{BS}(\overline{\sigma}(z)) P^{\delta}_{0,1}(X_t,z) = -2 \cA^{\delta} P_0(X_t,z),\\ \\
 P^{\delta}_{0,1}(X_T,z)= 0.
\end{array}
\right.$$
In general, by the Functional Feynman-Kac's Formula, Theorem \ref{feynman-kac},
\begin{align}
P^{\delta}_{0,1}(X_t,z) = 2 \ \bE\left[ \left.\int_t^T e^{-r(u-t)} \cA^{\delta} P_0(\overline{S}_u,z) du \ \right| \ \overline{S}_t = X_t \right]. \label{eq:P01_delta_path}
\end{align}

\begin{obs}[Parameter Reduction]
As in the path-independent case, parameter reduction could still be performed and therefore we can restrict ourselves to the group market parameters:
$$\{\sigma^{\star}(z), V_0^{\delta}(z), V_1^{\delta}(z), V_3^{\eps}(z)\}.$$
\end{obs}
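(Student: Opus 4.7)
The claim is that the parameter-reduction argument from Remark \ref{rmk:parameter_red} carries over to the path-dependent setting: the first-order approximation $P_0 + P^{\eps}_{1,0} + P^{\delta}_{0,1}$ can be rewritten (up to an $O(\eps+\delta)$ error) in a form depending only on $\{\sigma^{\star}(z), V_0^{\delta}(z), V_1^{\delta}(z), V_3^{\eps}(z)\}$. My strategy is to isolate the $V_2^{\eps}(z)\bD_2$ contribution inside $P^{\eps}_{1,0}$ and absorb it into the leading-order price by passing from the effective volatility $\bar{\sigma}(z)$ to the corrected volatility $\sigma^{\star}(z) = \sqrt{\bar{\sigma}^2(z) + 2V_2^{\eps}(z)}$.

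First I would use linearity to split $\cA^{\eps} = V_3^{\eps}\bD_1\bD_2 + V_2^{\eps}\bD_2$ and decompose $P^{\eps}_{1,0} = P^{\eps,3}_{1,0} + P^{\eps,2}_{1,0}$, where $P^{\eps,2}_{1,0}$ solves the source PPDE with right-hand side $-V_2^{\eps}(z)\bD_2 P_0$ and zero terminal condition. The Functional Feynman--Kac Formula (Theorem \ref{feynman-kac}) then represents $P^{\eps,2}_{1,0}$ as a conditional expectation, and since $V_2^{\eps}(z)$ and $\bar{\sigma}(z)$ are deterministic in the path variable, factoring them out and comparing with the Vega representation from Remark \ref{obs:vega} yields
\[
P^{\eps,2}_{1,0}(X_t,z) = \frac{V_2^{\eps}(z)}{\bar{\sigma}(z)}\,\nu(X_t, \bar{\sigma}(z)),
\]
where $\nu$ denotes the path-dependent Vega of $P_0 = P_{BS}(\cdot;\bar{\sigma}(z))$.

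Then I would Taylor expand the shifted-volatility price,
\[
P_{BS}(X_t; \sigma^{\star}(z)) = P_0(X_t, z) + (\sigma^{\star}-\bar{\sigma})\,\nu(X_t,\bar{\sigma}(z)) + O((\sigma^{\star}-\bar{\sigma})^2),
\]
and use $\sigma^{\star}(z) - \bar{\sigma}(z) = V_2^{\eps}(z)/\bar{\sigma}(z) + O(\eps)$ (since $V_2^{\eps} = O(\sqrt{\eps})$) to conclude $P_{BS}(X_t; \sigma^{\star}(z)) = P_0 + P^{\eps,2}_{1,0} + O(\eps)$. Substituting this identity back into the full first-order approximation produces the reduced expression in $\{\sigma^{\star}, V_0^{\delta}, V_1^{\delta}, V_3^{\eps}\}$ alone, since $P^{\eps,3}_{1,0}$ involves only $V_3^{\eps}$ and $P^{\delta}_{0,1}$ only $\{V_0^{\delta}, V_1^{\delta}\}$. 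The main obstacle is the clean identification $P^{\eps,2}_{1,0} = (V_2^{\eps}/\bar{\sigma})\,\nu$: this relies on $V_2^{\eps}(z)$ depending only on the frozen parameter $z$ so it can be pulled outside the conditional expectation, and on controlling the Taylor remainder $O((\sigma^{\star}-\bar{\sigma})^2)$ uniformly in the path $X_t$, which in turn follows from the smoothness convention of Assumption \ref{obs:smoothness}.
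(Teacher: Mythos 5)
Your argument is correct and is essentially the proof the paper leaves implicit: the remark is asserted by direct analogy with Remark \ref{rmk:parameter_red}, and your absorption of the $V_2^{\eps}(z)\bD_2$ source into the leading term via the Vega identity $P^{\eps,2}_{1,0} = (V_2^{\eps}/\bar{\sigma})\,\nu(\cdot,\bar{\sigma}(z))$ (which holds in full generality from the Feynman--Kac representations, without Assumption \ref{ass:weaker_path_dependence}) followed by the Taylor expansion at $\sigma^{\star}(z)$ is exactly the mechanism behind (\ref{eq:approx_reduced_intro}). One small point to complete the reduction: eliminating $\bar{\sigma}(z)$ from the parameter set also requires replacing it by $\sigma^{\star}(z)$ \emph{inside} the remaining correction terms (i.e., in the Greeks of $P_0$ and in the dynamics of $\overline{S}$ appearing in (\ref{eq:P10_eps_path}) and (\ref{eq:P01_delta_path})); since those terms already carry a prefactor of order $\sqrt{\eps}$ or $\sqrt{\delta}$ and $\sigma^{\star}-\bar{\sigma} = O(\sqrt{\eps})$, this substitution costs only $O(\eps + \sqrt{\eps\delta})$, which is within the accuracy of the approximation.
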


\subsection{Asian Options}

To exemplify the result above, let us consider the case where the contract functional $g$ is of the form $g(X_T) = \varphi(x_T, I(X_T))$, where
$$I(X_t) = \int_0^t x_u du.$$
See, for instance, \cite{multiscale_fouque_asian}. In this case, $P_0(X_t, z) = \varphi_0(t, x_t, I(X_t), \overline{\sigma}(z))$. By Equation (\ref{eq:P0_path}) and since $\Delta_t I(X_t) = x_t$, it is clear to see that $\varphi_0$ satisfies the usual pricing PDE for Asian options under the Black--Scholes:
$$\left\{
\begin{array}{l}
\ds \frac{\partial \varphi_0}{\partial t} + x \frac{\partial \varphi_0}{\partial I} + r x \frac{\partial \varphi_0}{\partial x} + \frac{1}{2} \sigma^2 x^2 \frac{\partial^2 \varphi_0}{\partial x^2} - r \varphi_0 = 0, \\ \\
\varphi_0(T, x, I,  \overline{\sigma}(z)) = \varphi(x, I).
\end{array}
\right.$$

Moreover, by Equations (\ref{eq:P10_eps_path}), (\ref{eq:P01_delta_path}) and by the Functional Feynman-Kac Formula, Theorem \ref{feynman-kac}, we find the 
\begin{align*}
\varphi^{\eps}_{1,0}(t, x, I, \overline{\sigma}(z)) &= \bE\left[ \left.\int_t^T e^{-r(u-t)} \cA^{\eps} \varphi_0(u, \overline{s}_u, I(\overline{S}_u), \overline{\sigma}(z)) du \ \right| \ \overline{s}_t = x, I(\overline{S}_t) = I\right],\\
\varphi^{\delta}_{0,1}(t, x, I, \overline{\sigma}(z)) &= 2 \ \bE\left[ \left.\int_t^T e^{-r(u-t)} \cA^{\delta} \varphi_0(u, \overline{s}_u, I(\overline{S}_u), \overline{\sigma}(z)) du \ \right| \ \overline{s}_t = x, I(\overline{S}_t) = I \right].
\end{align*}
Furthermore, since $\Delta_x I(X_t) = 0$, we have
$$\bD_k \varphi_0(t, x_t, I(X_t),  \overline{\sigma}(z)) = x_t^k \frac{\partial^k \varphi_0}{\partial x^k}(t, x_t, I(X_t),  \overline{\sigma}(z)),$$
and the Vega of $\varphi_0$ could be numerically computed by using the expression delineated in Remark \ref{obs:vega}. Therefore, the first-order approximation could be numerically calculated using the equations above.

\subsection{Closed-form Solutions in the \textit{Weaker} Path-dependent Case}

We will now prove that the formula for $P^{\eps}_{1,0}$ and $P_{0,1}^\delta$ we presented in Section \ref{sec:first_order} is valid here as well, as long as we assume the path dependence structure of the price $P_0$ is not very strong. We will make these claims precise now. 

\begin{ass}\label{ass:weaker_path_dependence}
For every continuous path $X_t$,
$$[\Delta_t, \Delta_{xx}]P_0(X_t,z) = [\Delta_t, \Delta_{xxx}]P_0(X_t,z) = 0.$$
\end{ass}

\begin{prop}\label{prop:simple_representation}

If the zero-order price $P_0$ satisfies Assumption \ref{ass:weaker_path_dependence}, then we find the well-known formula
\begin{align}
P^{\eps}_{1,0}(X_t,z) &= (T-t) \cA^{\eps} P_0(X_t,z), \label{eq:p_eps_formula}\\
P^{\delta}_{0,1}(X_t,z) &= (T-t) \cA^{\delta} P_0(X_t,z),\label{eq:p_delta_formula}
\end{align}
for any continuous path. More directly, by Remark \ref{obs:D1D2_D3}, the first-correction is given by
\begin{align}\label{eq:correction_weak}
&(T-t)\left(V_3^{\eps}(z) - \overline{\sigma}(z)V_1^{\delta}(z) \right) \bD_3 P_0 \\
&+ (T-t)\left(V_2^{\eps}(z) + 2V_3^{\eps}(z) - \overline{\sigma}(z)(V_0^{\delta}(z) + 2V_1^{\delta}(z))  \right) \bD_2 P_0 \nonumber
\end{align}

\end{prop}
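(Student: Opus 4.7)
The strategy is to derive both identities by a single application of Proposition \ref{prop:sol_PPDE_source}, which solves $\cL_{BS}(\sigma)f(X_t) = \psi(t)\cA f_0(X_t)$ with $f(X_T)=0$ by $f(X_t) = -\bigl(\int_t^T \psi(u)\,du\bigr)\cA f_0(X_t)$ whenever $\cA = \sum_k a_k \bD_k$ has constant coefficients and $[\Delta_t,\cA]f_0 = 0$ on continuous paths. The fast-scale case $P^\eps_{1,0}$ fits this template immediately; the slow-scale case $P^\delta_{0,1}$ requires first converting the $\partial/\partial\sigma$ appearing in $\cA^\delta$ into $\bD_2$ via the Vega identity of Remark \ref{obs:vega}.

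For $P^\eps_{1,0}$, I would first rewrite $\cA^\eps = V_3^\eps(z)\bD_1\bD_2 + V_2^\eps(z)\bD_2 = V_3^\eps(z)\bD_3 + (V_2^\eps(z)+2V_3^\eps(z))\bD_2$ using Remark \ref{obs:D1D2_D3}, so that it is a constant-coefficient combination of $\bD_2$ and $\bD_3$. Because the flat extension preserves the endpoint value $x_t$, one has $\Delta_t\bigl(x_t^k g(X_t)\bigr) = x_t^k \Delta_t g(X_t)$, and hence $[\Delta_t,\bD_k]f = x_t^k[\Delta_t,\Delta_x^k]f$. Assumption \ref{ass:weaker_path_dependence} then yields $[\Delta_t,\bD_2]P_0 = [\Delta_t,\bD_3]P_0 = 0$ on continuous paths, so $[\Delta_t,\cA^\eps]P_0 = 0$. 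Since $\cL_{BS}(\bar\sigma(z))P_0 = 0$, Proposition \ref{prop:sol_PPDE_source} with $\psi\equiv -1$ delivers $P^\eps_{1,0}(X_t,z) = (T-t)\cA^\eps P_0(X_t,z)$, i.e.\ (\ref{eq:p_eps_formula}).

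For $P^\delta_{0,1}$, the operator $\cA^\delta = -V_0^\delta(z)\,\partial/\partial\sigma - V_1^\delta(z)\,\bD_1\,\partial/\partial\sigma$ is not of the form $\sum a_k\bD_k$. However, since $[\Delta_t,\bD_2]P_{BS}(\,\cdot\,;\sigma) = 0$ on continuous paths under the same assumption, Remark \ref{obs:vega} supplies the Vega identity $\partial P_0/\partial\sigma = (T-t)\bar\sigma(z)\bD_2 P_0$. Substituting this into the source term of the PPDE for $P^\delta_{0,1}$ rewrites $-2\cA^\delta P_0$ as $2(T-t)\cA' P_0$, where $\cA' := \bar\sigma(z)\bigl[V_0^\delta(z)\bD_2 + V_1^\delta(z)\bD_1\bD_2\bigr]$ is a constant-coefficient combination of $\bD_2$ and $\bD_3$. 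The same commutator argument shows $[\Delta_t,\cA']P_0 = 0$, so Proposition \ref{prop:sol_PPDE_source} with $\psi(u) = 2(T-u)$ gives $P^\delta_{0,1} = -(T-t)^2\cA' P_0$. Re-inserting the Vega identity into $(T-t)\cA^\delta P_0$ produces precisely this same expression, which establishes (\ref{eq:p_delta_formula}). Formula (\ref{eq:correction_weak}) then follows by summing $(T-t)\cA^\eps P_0$ and $(T-t)\cA^\delta P_0$, expanding $\bD_1\bD_2 = 2\bD_2 + \bD_3$ in both pieces, and collecting the $\bD_2$- and $\bD_3$-coefficients.

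The main obstacle is conceptual rather than computational: Proposition \ref{prop:sol_PPDE_source} as stated only handles pure $\bD_k$-operators, so one has to bridge the gap for $\cA^\delta$ by invoking Remark \ref{obs:vega} — which itself rests on the $\bD_2$-half of Assumption \ref{ass:weaker_path_dependence}. The appearance of $\bD_3$ in the expansion of $\bD_1\bD_2$ for both $\cA^\eps$ and $\cA'$ is precisely why the assumption must also include the $\Delta_{xxx}$ commutator vanishing condition; once these two commutator identities are in hand, the rest of the argument is purely algebraic.
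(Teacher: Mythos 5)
Your argument is essentially the paper's own proof: the fast-scale identity follows from Proposition \ref{prop:sol_PPDE_source} once Assumption \ref{ass:weaker_path_dependence} gives $[\Delta_t,\cA^{\eps}]P_0=0$, and the slow-scale identity is obtained by first converting $\partial/\partial\sigma$ into $(T-t)\overline{\sigma}(z)\bD_2$ via the Vega relation of Remark \ref{obs:vega} and then applying the same proposition; if anything you are more explicit than the paper in noting that $\cA^{\delta}$ is not of the form $\sum_k a_k\bD_k$ and in exhibiting the source as $\psi(u)\cA'$ with $\psi(u)=2(T-u)$. One caveat: your own intermediate result $P^{\delta}_{0,1}=-(T-t)^2\cA' P_0$ implies that the $\overline{\sigma}(z)V_0^{\delta}$ and $\overline{\sigma}(z)V_1^{\delta}$ terms in the displayed correction (\ref{eq:correction_weak}) must carry a factor $(T-t)^2$ rather than $(T-t)$, so your closing claim that ``collecting coefficients'' yields (\ref{eq:correction_weak}) as printed is inconsistent with your own computation --- this appears to be a typo in the proposition's statement rather than an error in your reasoning, but the final sentence of your proof should flag the discrepancy instead of asserting agreement.
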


\begin{proof}
Notice that Assumption \ref{ass:weaker_path_dependence} implies that $[\Delta_t, \cA^{\eps}]P_0 = 0$ and then Equation (\ref{eq:p_eps_formula}) follows directly from Proposition \ref{prop:sol_PPDE_source}. The commutation requirement $[\Delta_t, \cA^{\delta}]P_0 = 0$ is not readily related to commutation of $\Delta_t$ and $\Delta_{xx}$ and $\Delta_{xxx}$, as it is in the fast mean-reverting case. However, as we have seen in Remark \ref{obs:vega} on the relation of the functional Vega and Gamma, since $[\Delta_t, \Delta_{xx}]P_0 = 0$, then
$$\cA^{\delta} P_0 = -(T-t) \overline{\sigma}(z)V_1^{\delta}(z)  \bD_1\bD_2 P_0 - (T-t) \overline{\sigma}(z) V_0^{\delta}(z) \bD_2 P_0.$$
Hence, since we also have $[\Delta_t, \Delta_{xxx}]P_0 = 0$, we conclude that $[\Delta_t, \cA^{\delta}]P_0 = 0$. Therefore, by Proposition \ref{prop:sol_PPDE_source}, we have the desired result.
\end{proof}

\begin{obs}\label{obs:assumption_weakly_pd}
If $P_0$ is weakly path-dependent (i.e. $[\Delta_t, \Delta_x] P_0 = 0$), one can straightforwardly show that Assumption \ref{ass:weaker_path_dependence} is equivalent to $\Delta_x P_0$ and $\Delta_{xx} P_0$ being weakly path-dependent as well.
\end{obs}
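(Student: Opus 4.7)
The plan is to exploit the fact that $\Delta_x$ commutes with itself: for functionals smooth enough in the sense of Assumption \ref{obs:smoothness}, $\Delta_{xx} = \Delta_x\circ\Delta_x$ and $\Delta_{xxx} = \Delta_x\circ\Delta_x\circ\Delta_x$. The two equivalences then follow by purely algebraic rearrangement of the Lie brackets, using the hypothesized weak path-dependence of $P_0$ (and, for the second step, the conclusion of the first) to move $\Delta_t$ past the outermost $\Delta_x$.

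For the first claim, I compute on any continuous path
\[ [\Delta_t,\Delta_x](\Delta_x P_0) = \Delta_t\Delta_x(\Delta_x P_0) - \Delta_x\Delta_t(\Delta_x P_0) = \Delta_t\Delta_{xx}P_0 - \Delta_x\Delta_t\Delta_x P_0. \]
The assumption $[\Delta_t,\Delta_x]P_0 = 0$ gives $\Delta_t\Delta_x P_0 = \Delta_x\Delta_t P_0$, so $\Delta_x\Delta_t\Delta_x P_0 = \Delta_x\Delta_x\Delta_t P_0 = \Delta_{xx}\Delta_t P_0$. Substituting back, $[\Delta_t,\Delta_x](\Delta_x P_0) = \Delta_t\Delta_{xx}P_0 - \Delta_{xx}\Delta_t P_0 = [\Delta_t,\Delta_{xx}]P_0$. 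Hence $\Delta_x P_0$ is weakly path-dependent if and only if $[\Delta_t,\Delta_{xx}]P_0 = 0$.

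For the second claim I apply the same recipe one step further:
\[ [\Delta_t,\Delta_x](\Delta_{xx} P_0) = \Delta_t\Delta_{xxx}P_0 - \Delta_x\Delta_t\Delta_{xx}P_0. \]
Now I use $[\Delta_t,\Delta_{xx}]P_0 = 0$, which is either part of Assumption \ref{ass:weaker_path_dependence} directly or, in the other direction of the equivalence, the content of the first step. This gives $\Delta_t\Delta_{xx}P_0 = \Delta_{xx}\Delta_t P_0$, whence $\Delta_x\Delta_t\Delta_{xx}P_0 = \Delta_{xxx}\Delta_t P_0$, and therefore $[\Delta_t,\Delta_x](\Delta_{xx}P_0) = [\Delta_t,\Delta_{xxx}]P_0$, as claimed.

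There is no real obstacle: the argument is essentially bookkeeping, and the only subtlety is that all identities are to be understood on continuous paths, where by Remark \ref{obs:result_delta_x} the functional space derivatives of a $\bC^{1,2}$ functional are fully determined. Assumption \ref{obs:smoothness} ensures every iterated derivative appearing above exists and is $\Lambda$-continuous, so each composition $\Delta_x\circ\Delta_x$, $\Delta_x\circ\Delta_{xx}$ is well-defined and coincides with $\Delta_{xx}$, $\Delta_{xxx}$ respectively.
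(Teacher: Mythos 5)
Your argument is correct and is precisely the ``straightforward'' computation the paper leaves to the reader (the remark is stated without proof): the identities $[\Delta_t,\Delta_x](\Delta_x P_0) = [\Delta_t,\Delta_{xx}]P_0$ and, once $[\Delta_t,\Delta_{xx}]P_0=0$ is in hand, $[\Delta_t,\Delta_x](\Delta_{xx} P_0) = [\Delta_t,\Delta_{xxx}]P_0$ give both directions of the equivalence at once. You also correctly identify the one genuine subtlety, namely that an identity valid only on continuous paths may be differentiated in $\Delta_x$ while preserving validity on continuous paths, which is exactly what Remark \ref{obs:result_delta_x} together with the smoothness Assumption \ref{obs:smoothness} licenses.
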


\subsection{Option on Quadratic Variation}

We will consider an option with payoff $g(X_T) = \varphi(x_T, QV(X_T))$, where $QV$ is the quadratic variation functional. We forward the reader to \cite{fito_saporito_greeks} for the details on this type of options and its properties, including the pathwise definition of the quadratic variation functional. We write $P_0(X_t, z) = \varphi_0(t, x_t, QV(X_t), z)$ and using the fact that $\Delta_t QV(X_t) = 0$, $\Delta_x QV(X_t) = 2(x_t - x_{t-})$ and $\Delta_{xx} QV(X_t) = 2$, we can readily show that $[\Delta_t, \Delta_x]P_0 = 0$, for continuous paths, and hence it is weakly path-dependent. Moreover, 
\begin{align*}
\Delta_x P_0(X_t,z) &= \frac{\partial \varphi_0}{\partial x},\\
\Delta_{xx} P_0(X_t, z) &= \frac{\partial^2 \varphi_0}{\partial x^2} + 2 \frac{\partial \varphi_0}{\partial QV},\\
\Delta_{xxx} P_0(X_t, z) &= \frac{\partial^3 \varphi_0}{\partial x^3} + 6 \frac{\partial^2 \varphi_0}{\partial x \partial QV},
\end{align*}
for every continuous path $X_t$. Hence, by Remark \ref{obs:assumption_weakly_pd}, $P_0$ satisfies Assumption \ref{ass:weaker_path_dependence}. These formulas can be applied to computationally find the first-order correction as outlined in Proposition \ref{prop:simple_representation}.

\subsection{Accuracy Theorem}

\begin{teo}\label{thm:accuracy_path}

We assume items 1 to 5 from Theorem \ref{thm:accuracy_intro} and additionally that

\begin{enumerate}

\item[(6*)] The payoff functional $g$ is such that $P^{\eps,\delta}$ is smooth as in Remark \ref{obs:smoothness}.

\end{enumerate}
Then,
$$P^{\eps,\delta}(X_t,y,z) = P_0(X_t,z) + P_{1,0}^{\eps}(X_t,z) + P_{0,1}^{\delta}(X_t,z) + O(\eps + \delta),$$
with $P_0$, $P_{1,0}^{\eps}$ and $P_{0,1}^{\delta}$ given by Equations (\ref{eq:P0_path}), (\ref{eq:P10_eps_path}) and (\ref{eq:P01_delta_path}), respectively.
\end{teo}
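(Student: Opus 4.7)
The plan is to mimic the strategy used for the path-independent Accuracy Theorem in the Fouque--Papanicolaou--Sircar--S{\o}lna framework, but carried out inside the functional It\^o calculus: the Functional Feynman-Kac Formula (Theorem \ref{feynman-kac}) will replace its classical counterpart, and Assumption (6*) together with Remark \ref{obs:smoothness} supply the $\bC^{1,2}$ regularity needed to differentiate path functionals. First I would augment the ansatz used in Section 4.4 by including higher-order correction functionals $P_{2,0}$, $P_{1,1}$ and $P_{0,2}$, defined from the next orders of the formal expansion. Each is obtained by solving a Poisson equation in $y$ (with $(X_t,z)$ as parameters) of exactly the kind that arose for $P_{2,0}$ in the computation of $P_{1,0}^\eps$; the solvability conditions again produce well-posed PPDEs of Black--Scholes type with source terms involving $\cA^\eps P_0$, $\cA^\delta P_0$, and cross corrections.

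Next I would define the augmented approximation
$$Q^{\eps,\delta} = P_0 + \sqrt{\eps}\,P_{1,0} + \sqrt{\delta}\,P_{0,1} + \eps\, P_{2,0} + \sqrt{\eps\delta}\, P_{1,1} + \delta\, P_{0,2},$$
and compute $\cL^{\eps,\delta} Q^{\eps,\delta}$ using the expansion of $\cL^{\eps,\delta}$ displayed in (\ref{eq:pricing_PPDE}). By construction of the $P_{i,j}$, all powers of $\eps^{-1}$, $\eps^{-1/2}$, $\sqrt{\delta/\eps}$ and constant order cancel, leaving a residual source $R^{\eps,\delta}_{\mathrm{src}}$ whose coefficients are of order $\sqrt{\eps},\sqrt{\delta},\sqrt{\eps\delta}$ multiplying $\Lambda$-continuous functionals of at most polynomial growth in $(x_t,y,z)$. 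The terminal mismatch $Q^{\eps,\delta}(X_T,y,z)-g(X_T)$ consists only of the $\eps$, $\sqrt{\eps\delta}$ and $\delta$ pieces and is therefore also $O(\eps+\delta)$.

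Setting $E^{\eps,\delta} = P^{\eps,\delta} - Q^{\eps,\delta}$, this functional satisfies
$$\cL^{\eps,\delta} E^{\eps,\delta}(X_t,y,z) = -R^{\eps,\delta}_{\mathrm{src}}(X_t,y,z), \qquad E^{\eps,\delta}(X_T,y,z) = Q^{\eps,\delta}(X_T,y,z)-g(X_T).$$
Because Assumption (6*) gives $P^{\eps,\delta}\in\bC^{1,2}$ and the $P_{i,j}$ inherit the same smoothness from $P_0=P_{BS}(\cdot;\bar\sigma(z))$ via the representations already used in Section 4.4, the Functional Feynman-Kac Formula applies to $E^{\eps,\delta}$, yielding
$$E^{\eps,\delta}(X_t,y,z) = \bE\!\left[e^{-r(T-t)}\bigl(Q^{\eps,\delta}(S_T,y^\eps_T,z^\delta_T)-g(S_T)\bigr) + \int_t^T e^{-r(u-t)} R^{\eps,\delta}_{\mathrm{src}}(S_u,y^\eps_u,z^\delta_u)\,du \,\bigg|\, S_t=X_t, y^\eps_t=y, z^\delta_t=z\right].$$

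Finally I would invoke assumptions 1--5 of Theorem \ref{thm:accuracy_intro} to control this expectation. Boundedness of $f$, $\Gamma_1$, $\Gamma_2$ and the uniform moments of $y^\eps$, $z^\delta$ let us dominate the integrand by a polynomial of the state variables times the prefactor of order $\eps+\delta$; the polynomial growth of the Poisson corrector $\phi$ is precisely item 2 of that theorem. Taking expectations and bounding uniformly in $(\eps,\delta)$ yields $|E^{\eps,\delta}|\leq C(\eps+\delta)$ pointwise, which is the claim.

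The main obstacle is verifying that the higher-order correctors $P_{2,0}, P_{1,1}, P_{0,2}$ and the resulting residual source $R^{\eps,\delta}_{\mathrm{src}}$ are genuinely $\Lambda$-continuous functionals of at most polynomial growth in $(y,z)$ and controlled growth in the path $X_t$. In the path-independent case this is transparent because the correctors are finite linear combinations of Greeks of $P_{BS}$; in the functional setting one has to read these growth estimates off the Functional Feynman-Kac representations (\ref{eq:P10_eps_path}) and (\ref{eq:P01_delta_path}) and their higher-order analogues, composing the polynomial growth of $\phi(y,z)$ and its $y$-derivatives with the growth of $\bD_k P_0$, $\bD_j\bD_k P_0$ along Black--Scholes paths. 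The rest of the argument is a bookkeeping exercise modelled exactly on \cite{fouque_second_order}.
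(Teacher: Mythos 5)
Your proposal is correct and follows essentially the same route the paper takes: the paper itself only sketches this proof, arguing that the higher-order corrector/residual analysis of the classical case (Fouque--Lorig--Sircar) transfers verbatim because the Functional Feynman--Kac Formula replaces the classical one and the $(y^\eps,z^\delta)$ estimates are untouched by the path-dependence, which is exactly the residual argument you lay out. The obstacle you flag at the end (regularity and growth of the higher-order correctors in the functional setting) is likewise the point the paper acknowledges but defers to future work.
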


For the attentive reader, it should be clear by now the similarities of the first-order perturbation method in the classical and functional frameworks. Hence, it should be also clear that the same proof of the accuracy of the functional first-order approximation can be carried out without much difficulty. In fact, the definition of the higher-order approximation of $P^{\eps,\delta}$ and the analysis of residual of such approximation follows identically to the classical case. Since the Feynman-Kac formula is also available in the functional framework, see Theorem \ref{feynman-kac}, the study of the boundedness of the residual follows similarly. The reader should also notice, as it was commented before, the properties of the processes $y^{\eps}$ and $z^{\delta}$ stay unadulterated since the functional aspect is considered only for the stock price variable. Furthermore, one should be able to consider weaker assumptions on the payoff functional $g$, which should be similar to the ones in the path-independent case stated in Theorem \ref{thm:accuracy_intro}, so that theorem above also holds true. We leave this for future work.

One of the main conclusions of this paper is the following corollary of the above theorem:

\begin{cor}
  
The market group parameters $(\overline{\sigma}(z), V_0^{\delta}(z), V_1^{\delta}(z), V_2^{\eps}(z),$ $V_3^{\eps}(z))$, or in their reduced form $\{\sigma^{\star}(z), V_0^{\delta}(z), V_1^{\delta}(z), V_3^{\eps}(z)\}$, do not change based on the path-dependence of the payoff functional. 
  
\end{cor}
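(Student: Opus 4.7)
The plan is to prove the corollary by direct inspection and comparison of formulas, since the bulk of the work has already been carried out in the formal derivation in Section \ref{sec:fito_multiscale}. No new dynamic or analytic input is needed; one simply tracks where the payoff functional $g$ enters the definitions of the group market parameters, and observes that it does not.

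First, I would place side-by-side the path-dependent formulas for $\overline{\sigma}(z)$ (below \eqref{eq:P0_path}), $V_3^{\eps}(z)$, $V_2^{\eps}(z)$ from \eqref{eq:path_V3eps}--\eqref{eq:path_V2eps}, and $V_1^{\delta}(z)$, $V_0^{\delta}(z)$ from \eqref{V1delta}--\eqref{V0delta}, against their path-independent counterparts \eqref{eq:sigma_bar_intro}, \eqref{eq:V3eps_intro}--\eqref{eq:V2eps_intro}, \eqref{eq:V1delta_intro}--\eqref{eq:V0delta_intro}. The right-hand sides in both cases are built from the coefficient functions $f, \beta, g, \Gamma_1, \Gamma_2$, the correlations $\rhor_1, \rhor_2$, the invariant-measure bracket $\langle\cdot\rangle$ of $\cL_0$, and the solution $\phi$ of the Poisson equation \eqref{eq:path_poisson_phi}, which is identical to \eqref{eq:poisson_eq_intro}. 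Since $\phi$ is determined entirely by $\cL_0$ and $f(\cdot,z)$, it does not depend on the payoff $g$. The explicit formulas for $V_2^{\eps}$, $V_3^{\eps}$, $V_0^{\delta}$, $V_1^{\delta}$ match term by term with the classical case, which establishes the first claim.

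Next, for the reduced set of parameters, the effective volatility $\sigma^{\star}(z) = \sqrt{\overline{\sigma}^2(z) + 2 V_2^{\eps}(z)}$ of Remark \ref{rmk:parameter_red} is a function of two quantities already shown to be payoff-independent, and hence it too is invariant under changes of payoff. The only structural difference between the classical and path-dependent settings is that the operators $\cA^{\eps}$ and $\cA^{\delta}$ in \eqref{eq:path_Aeps} and \eqref{eq:path_Adelta} now employ the functional Dupire operators $\bD_k$ and the functional Vega $\partial/\partial\sigma$ of $P_0 = P_{BS}(X_t;\overline{\sigma}(z))$; their scalar coefficients $V_i^{\eps}$, $V_i^{\delta}$ are unchanged.

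There is no genuine obstacle here: the heavy lifting was the derivation of the PPDEs for $P_0$, $P_{1,0}^{\eps}$, $P_{0,1}^{\delta}$ in the preceding subsections, which produced exactly the same coefficients as in the classical case. The corollary is then a one-line consequence of Theorem \ref{thm:accuracy_path} together with the formulaic identifications above, and of the observation that the calibration formulas \eqref{eq:cal_for_1_intro}--\eqref{eq:cal_for_2_intro}, derived from vanilla implied volatilities, recover precisely these same parameters.
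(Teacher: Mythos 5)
Your proposal is correct and follows essentially the same route as the paper, which justifies the corollary purely by inspection: the derivation in Section \ref{sec:fito_multiscale} produces formulas \eqref{eq:path_V3eps}--\eqref{eq:path_V2eps} and \eqref{V1delta}--\eqref{V0delta} that coincide term by term with \eqref{eq:V3eps_intro}--\eqref{eq:V2eps_intro} and \eqref{eq:V0delta_intro}--\eqref{eq:V1delta_intro}, and none of these (nor $\overline{\sigma}(z)$, nor the Poisson solution $\phi$) involves the payoff functional, the path-dependence entering only through the functional operators $\bD_k$ acting on $P_0$. Your additional remarks on $\sigma^{\star}(z)$ and on the calibration formulas are consistent with the paper's Remark \ref{rmk:parameter_red} and Section \ref{sec:calibration_implied}.
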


\section{Conclusion and Future Work}

The main conclusion of this paper is that the first-order approximation for path-dependent options depend on the same market group parameters of the first-order approximation for vanilla options. Therefore, once the market group parameters are calibrated to vanilla option market data, one could use them to compute consistent first-order approximation of prices for path-dependent, exotic derivatives using the general representations (\ref{eq:P10_eps_path}) and (\ref{eq:P01_delta_path}). Without the functional It\^o calculus framework, the results above had to be stated and proved for each particular type of path-dependence. 

Moreover, when the path-dependence is not too strong, the first-order approximation we find for path-independent derivatives contracts, Equations (\ref{eq:p0_into})--(\ref{eq:p01delta_into}), holds. Namely, if the zero-order price satisfy the following commutation relations
$$[\Delta_t,\Delta_{xx}]P_0 = [\Delta_t,\Delta_{xxx}]P_0 = 0,$$
we have, essentially, the same formulas as in the path-independent, see Proposition \ref{prop:simple_representation}. However, the Greeks that compose this first-order correction are the path-dependent Greeks as in (\ref{eq:path_Aeps}) and (\ref{eq:path_Adelta}).

The development of numerical methods for the efficient computation of (\ref{eq:P10_eps_path}) and (\ref{eq:P01_delta_path}) will be left for future work. Additionally, forthcoming research will be conducted to weaken the smoothness assumption of Theorem \ref{thm:accuracy_path} in other to consider other types of contract functionals, as, for instance, barrier options, since the running maximum and minimum functionals are not smooth.

\subsubsection*{Acknowledgments}

I would like to thank J.-P. Fouque and B. Dupire for all the insightful discussions.

%\bibliography{bib_all}

\begin{thebibliography}{7}
\providecommand{\natexlab}[1]{#1}
\providecommand{\url}[1]{\texttt{#1}}
\expandafter\ifx\csname urlstyle\endcsname\relax
  \providecommand{\doi}[1]{doi: #1}\else
  \providecommand{\doi}{doi: \begingroup \urlstyle{rm}\Url}\fi

\bibitem[Dupire(2009)]{fito_dupire}
B.~Dupire.
\newblock Functional {I}t\^o {C}alculus.
\newblock 2009.
\newblock Available at SSRN: \url{http://ssrn.com/abstract=1435551}.

\bibitem[Fouque et~al.(2003)Fouque, Papanicolaou, Sircar, and
  Solna]{multiscale_fouque_asian}
J.-P. Fouque, G.~Papanicolaou, R.~Sircar, and K.~Solna.
\newblock Multiscale {S}tochastic {V}olatility {A}symptotics.
\newblock \emph{SIAM Multiscale Model. Simul.}, 2\penalty0 (1):\penalty0
  22--42, 2003.

\bibitem[Fouque et~al.(2011)Fouque, Papanicolaou, Sircar, and
  S{\o}lna]{multiscale_fouque_new_book}
J.-P. Fouque, G.~Papanicolaou, R.~Sircar, and K.~S{\o}lna.
\newblock \emph{Multiscale {S}tochastic {V}olatility for {E}quity, {I}nterest
  {R}ate, and {C}redit {D}erivatives}.
\newblock Cambridge University Press, 2011.

\bibitem[Fouque et~al.(2016)Fouque, Lorig, and Sircar]{fouque_second_order}
J.-P. Fouque, M.~Lorig, and R.~Sircar.
\newblock {S}econd {O}rder {M}ultiscale {S}tochastic {V}olatility
  {A}symptotics: {S}tochastic {T}erminal {L}ayer {A}nalysis \& {C}alibration.
\newblock \emph{Finance Stoch.}, 20\penalty0 (3):\penalty0 543--588, 2016.

\bibitem[Fourni\'e(2010)]{fito_fournie_thesis}
D.-A. Fourni\'e.
\newblock \emph{Functional {I}t\^o {C}alculus and {A}pplications}.
\newblock PhD thesis, Columbia University, 2010.

\bibitem[Heston(1993)]{heston93}
S.~L. Heston.
\newblock A {C}losed-{F}orm {S}olution for {O}ptions with {S}tochastic
  {V}olatility with {A}pplications to {B}ond and {C}urrency {O}ptions.
\newblock \emph{The Review of Financial Studies}, 6\penalty0 (2):\penalty0
  327--343, 1993.

\bibitem[Jazaerli and Saporito(2017)]{fito_saporito_greeks}
S.~Jazaerli and Y.~F. Saporito.
\newblock {F}unctional {I}t\^o {C}alculus, {P}ath-dependence and the
  {C}omputation of {G}reeks.
\newblock \emph{to appear in the Stochastic Processes and their Applications.},
  2017.
\newblock Available at arXiv: \url{http://arxiv.org/abs/1311.3881}.

\end{thebibliography}
\bibliographystyle{plainnat}

\end{document}